\newtheorem{thm}{Theorem}[section] 
\theoremstyle{definition} 
\newtheorem{dfn}{Definition}[section] 
\theoremstyle{axiom}
\theoremstyle{remark} 
\theoremstyle{plain} 
\newtheorem{prop}[thm]{Proposition}
\theoremstyle{plain}
\begin{document}

\title{{\it Pacotte's tree networks}, graph theory and projective geometry}
\author{Daniel Parrochia}
\date{University of Lyon (France)}
\maketitle

\textbf{Abstract}

The notion of "tree network" has sparked renewed interest in recent years, particularly in computer science and biology (neural network). However, this notion is usually interpreted in an extremely restrictive way: essentially linked to data processing, today's “tree networks” are hybrid network topologies in which star networks are generally interconnected via bus networks. These networks are, most often, hierarchical and regular, and each of their nodes can have an arbitrary number of child nodes. At the outset, however, the notion of "tree network", introduced in 1936 by Belgian physicist Julien Pacotte, was quite different: more general and, at the same time, more constrained, it should also serve an ambitious objective: the reconstruction of mathematics from concrete empirical structures. Usually poorly commented on and poorly understood (especially by philosophers), it had no real posterity. In this article, we first try to clarify this notion of "tree network" in the sense of Julien Pacotte, which makes it possible to eliminate the bad interpretations to which this notion has given rise. To this end, we use the language and concepts of graph theory and formalize the main properties of these networks which, contrary to popular belief, are not, in general, trees. In a second part, we then try to follow and explain, step by step, how Pacotte intended, using concepts borrowed from projective geometry, to reconstruct all of mathematics from such a network.

\textbf{Key words.}
Pacotte, Tree network, graph, bipartite graph, tree, cycle, cocycle, coloring of a graph, history of graph theory.

\section{Introduction}

	Physicist and epistemologist of Belgian origin, Julien Pacotte is a little-known thinker\footnote {It would seem to be Julien Désiré Humbert Ghislain Pacotte, born in La Louvière (Hainaut) in 1887 (see: La Louvière, Civil status, Births 1885 -1895, Act 116 digitized on Zoekakten.nl). A physicist by training, he is said to have been an assistant at the Royal Observatory of Belgium (Archives of the Royal Observatory, personal file, box 029, no. 976), then a researcher at the National Fund for Scientific Research in Brussels. Relatives say he may have died in 1956 or 1957 at the Tombeek sanatorium (Overijse), the consequences of tuberculosis. His grave is in the Nivelles cemetery. Pacotte's father, Emile, was of French origin.}. His work (especially on physics and technical thinking) has been cited, in its time, by French authors like Gaston Bachelard (see \cite{Bac}, 250) and Georges Canguilhem (see \cite{Can}, 131) but, as recently observed François Sigaut (see \cite{Sig}, it is one of the great "oversights" in the history of science and technology. We are interested here in one of his undoubtedly deepest book - {\it Le réseau arborescent, schème primordial de la pensée (The Tree Network, primordial scheme of thought)} (1936) - of which there are very few mentions in the literature\footnote{The first reference, signed by Jean Ullmo, and which is not a review, is limited to mentioning the existence of the book (see \cite{Ull}, 376); the second (see \cite{Bou}, 124) only indicates that “starting from the idea that the 'image of a current dividing into ramifications is the primordial schema of thought, the author tries to build a geometry of the tree structure with the aim of shedding light on the bases of formal thought”. We can also mention a review of E. Pinte (see \cite{Pin}, 14)  and another one from P. Schrecker, the translator of Leibniz (see \cite{Sch}, this 24-line text being by far the most faithful one can find. The author, who does not go into detail on Pacotte's constructions, considers his conception "simple and fruitful". He evokes about it the generation of numbers according to Plato, "as it was recently renewed by Mr. Brouwer".}. This work was commented on by Gilles Deleuze and Félix Guattari, who made it the spearhead of classical thought, fundamentally linked, according to them, to the notion of "tree" (see \cite{Del}, 25)\footnote{Unfortunately, the authors have superficially read the beginning of the book  because the diagram in note 12, supposed to reproduce an example of a tree network, is false.} - to which they claim to oppose that of {\it rhizome}\footnote {Obviously, the authors lack mathematical knowledge: there is no need to think very long to realize that the rhizome is also a tree, at least in the mathematical sense of the term.}. We will show that there is a misinterpretation here, because Pacotte's networks are not necessarily trees and assume complex mathematics, far from the usual trivialities. In reality, Julien Pacotte was one of the best experts one can find in the science of his time\footnote{Evidenced, among others, the list of publications by Julien Pacotte between 1920 and 1940: \cite{Pac1}, \cite{Pac2}, \cite{Pac3}, \cite{Pac4}, \cite{Pac5}, \cite{Pac6}, \cite{Pac7}, \cite{Pac8}, \cite{Pac9}, \cite{Pac10}, \cite{Pac11}, \cite{Pac12} and its competence is incommensurate with that of his critics.}. 
	Unfortunately, this skill, as well as the novelty of its perspective, does not seem to have been perceived more abroad than in France. Christy Wampole, for example, refers to Pacotte's work in a book devoted to the metaphor of the tree, but only to indicate that it was taken by the author as a metaphor for thought (see \cite{Wam}, 183). Same observation for the article by Matteo Pasquinelli, where the author, once again, quotes the text of Pacotte in support of a metaphorical conception of the tree (see \cite{Pas}), which is obviously , as in the previous cases - Deleuze-Guattari understood -, a total error, since it is on the contrary for the author a "formal scheme" with its laws, not a simple image. In the following, we will first redefine the Pacotte network in terms of graph. In a second step, we will comment on its application and its raison d'être, on which the previous authors are silent (the tree reconstruction of mathematics). \\

\Large
\qquad\qquad Partie I : Nature and properties of the network
\normalsize

\section{The tree network}

Tree networks, today, have structures like the following (see Fig. 1):
	
\begin {figure} [h]
\centering
\vspace {-2 \baselineskip}
\includegraphics [width = 6in] {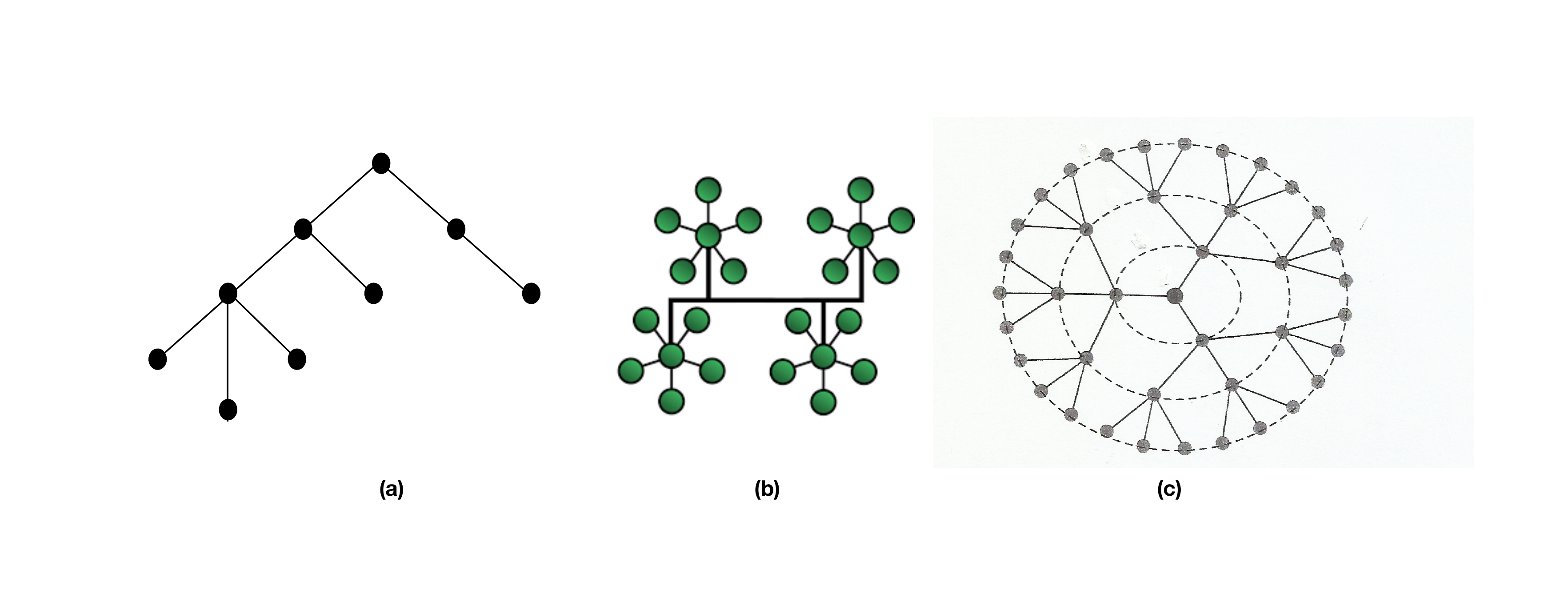}
\vspace {-2 \baselineskip}
\caption {Example of tree network structures}
\label {fig: Elements}
\end{figure}

In the literature (see \cite{Bra}; \cite{Sos}), we meet essentially trees, i.e connected graphs without cycles (see Fig.1 (a)), star-bus networks, i.e hybrid network topologies in which star networks are interconnected via bus networks (see Fig.1 (b)), and regular hierarchical $n$-graphs (see  \cite{Kro}) as the hierarchical $3$-graph (see Fig.1 (c)) – which could have described in the past the structure of the Hegelian Encyclopedy.

In fact, the tree networks, as introduced by Julien Pacotte, were quite different, in the sense that they integrated into a single structure tree and circular aspects.

“We call tree structure, wrote Julien Pacotte at the beginning of this work, the system of line segments representing a current which splits, thus generating currents which can split in their turn and so on. Thus, in ordinary space, the figure formed by a plant, from the soil to the birth of the leaves, or even, the cell which splits and multiplies by successive generations, or even the colloidal grain which disintegrates, the molecule which dissociates, the atom which ionizes, the atomic nucleus which disintegrates in stages, etc." (see \cite{Pac8}, 3).

It will be noted, with the author, that such oriented "trees" are divergent. But there is no doubt that we can also find examples where the orientation of the trees is reversed:

"Thus the crystallization of a heterogeneous magma in the middle of a granite structure, the coalescence of the micelles, the march of the half-chromosomes towards one of the two poles of the ovoid nucleus to form a new nucleus, are, in the space-time continuum, converging trees"\cite{Pac8}, 4).

On the basis of this intuitive data, Pacotte introduced the more complex idea of a tree-like, circulatory and polarized "network": nature, as well as life, seems in fact to attach great importance to convergent-divergent tree systems (plants, circulatory system, nervous system, etc.)\footnote{Pacotte seems to have noticed that natural networks can be both tree-like {\it and} looped, which has only been highlighted fairly recently for plant networks. See \cite{Kat}; \cite{Cor}.}. In general, the generation of collections seems to be reduced to the schema of the tree structure, so that, if we admit that pure mathematics and formal logic have as their sole object collections and numbers, then "an original theory of polarized networks would represent the systematic exposition of the foundations of formal, logical and mathematical thought" (see \cite{Pac8}, 5).

Julien Pacotte, at a time when – let us remember – the main concepts of order theory and graph theory did not yet exist in all their clarity, therefore set out to build a "geometry of ramifications, an entirely new doctrine, independent like amorphous geometry\footnote{This means undoubtedly topology.} or pure projective geometry, but valid as the foundation of formal logic and pure mathematics” (see \cite{Pac8}, 6).

We will try to reread first it in the light of the concepts and structures of graph theory.

\section {Basic concepts: the idea of a "polarized network"}

Given segments of lines radiating around a point and traversed, each in a determined direction, we introduce, according to Pacotte, the following definitions. \\

\begin{dfn}
We call {\it diffluence} a set formed by $ n $ divergent arcs of a vertex with 1 single arc converging towards it. \\
\end{dfn}

\begin{dfn}
We call {\it confluence} a set formed by $ p $ arcs converging towards a vertex with 1 single diverging arc oriented outwards. \\
\end{dfn}

\begin{dfn}
In a diffluence (resp. confluence), the system of divergent (resp. convergent) currents is called a {\it sheaf}\footnote{This structure should not be confused with the mathematical structure of {\it sheaf} ("faisceau", in French).The French word, used by Pacotte here, is "gerbe".}.
In both cases, the remaining single segment is named {\it opposite segment}. \\
\end{dfn}

\subsection{Visualisation}

	Let two successive diffluences (resp. confluences). We can assume that a ray of the sheaf of one is the opposite segment of the other. Furthermore, the same line can serve as the opposite segment for a confluence and a diffluence. Let us first visualize this in the case of two diffluences (Fig. 2 left) and in that of a confluence and an associated diffluence (Fig. 2 right). \\

\begin {figure} [h]
\centering
\vspace {-3 \baselineskip}
\includegraphics [width = 6in] {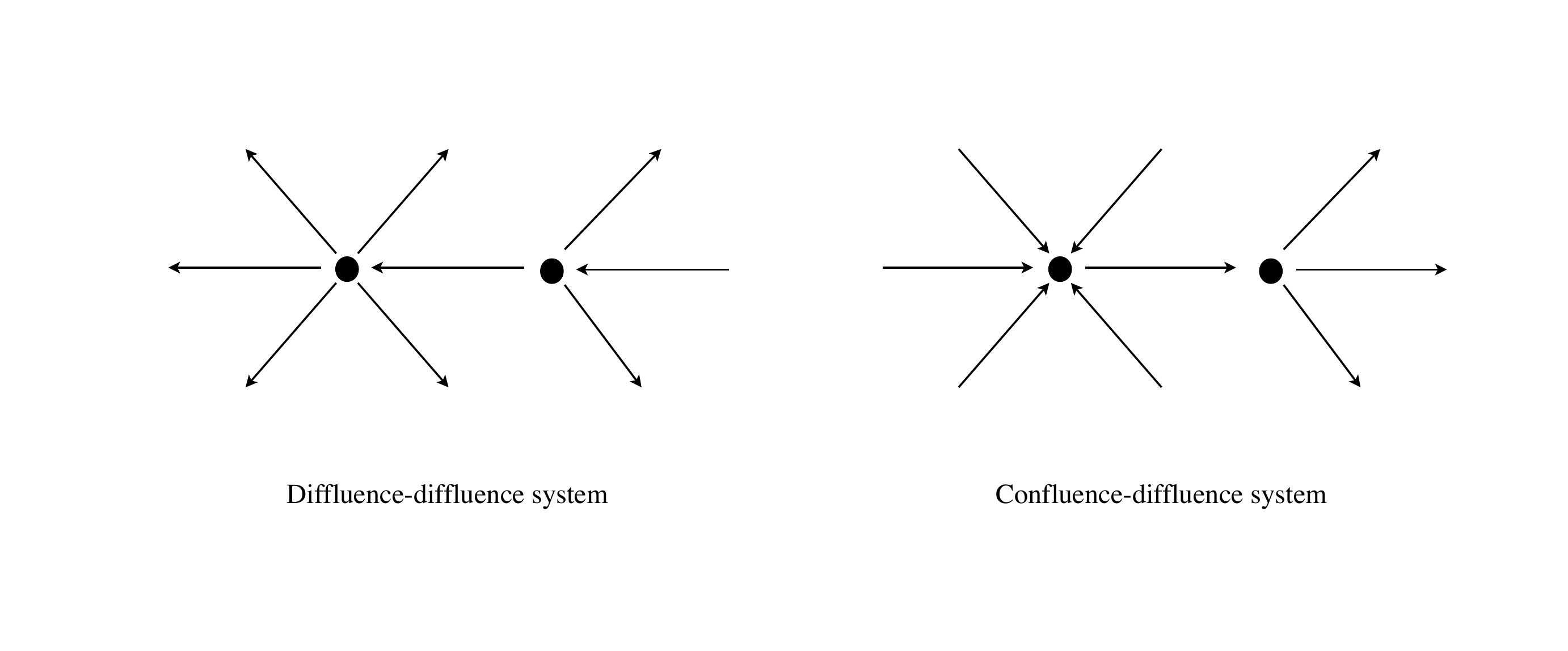}
\vspace {-3 \baselineskip}
\caption {The elements of the polarized network}
\label {fig: Elements1}
\end{figure}

\begin{dfn}
A network of oriented segments assembled by their ends and forming confluences and diffluences is called a {\it polarized network}.
\end{dfn}

\subsection {Properties of the polarized network}

The polarized network has the following properties:

\begin{enumerate}
\item There is a current everywhere in the network, the direction of which is permanent along a segment;
\item It is permissible to return all the currents at once: a diffluence becomes a confluence and vice versa;
\item Any network accepts a {\it direct current} and a {\it reverse current}. We will name {\it positive poles} the diffluences of the direct current;
\item The sign of the poles results from the direction of the current.
\end{enumerate}
	
\subsection {The notion of {\it dipole}}

Let us take again the case of two successive diffluences: the line joining the two positive poles plays different roles with respect to the two poles, being on one side a ray of sheaf, on the other an opposite segment. So that each segment has its own role, the same for its two poles, Pacotte cuts the line between the two positive poles by inserting a new point. In the latter, there is neither confluence nor diffluence, but simply {\it fluence}. This point will also be called a pole, but it will be said {\it negative}. Conversely, between two negative poles, a positive pole will always be inserted. In the case of a consecutive diffluence and confluence, the common line will only play the role of opposite segment and there will be no need to insert a pole of fluence (see Fig. 3).

\begin{figure} [h]
\centering
\vspace {-2 \baselineskip}
\includegraphics [width = 4in] {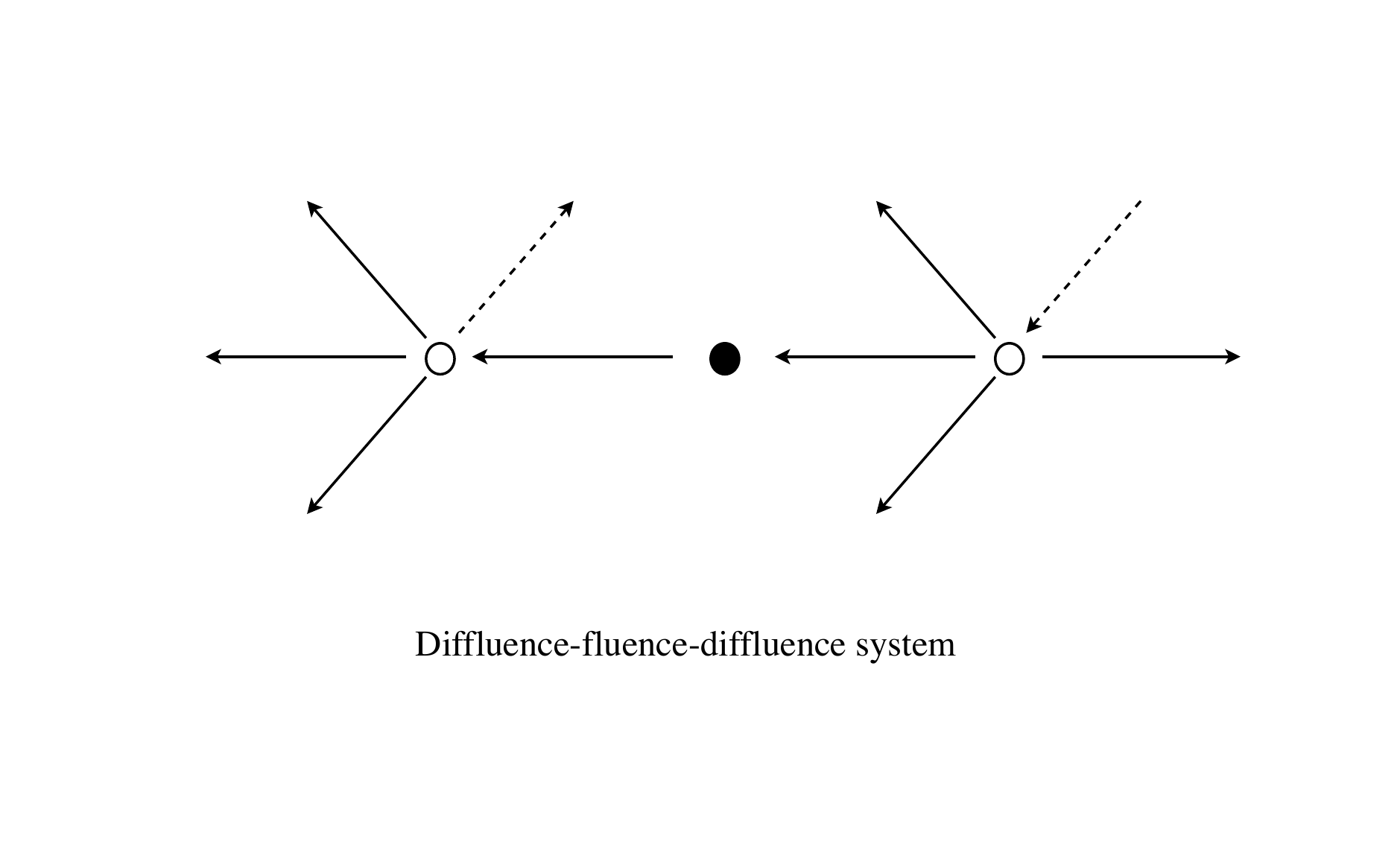}
\vspace {-3 \baselineskip}
\caption {The dipole network}
\label {fig: Elements2}
\end{figure}

Under these conditions, the network has the following new properties, which are deduced from the above: \\

\begin{prop}
The negative and positive poles alternate everywhere. \\
\end{prop}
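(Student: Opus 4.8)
The plan is to translate the statement into graph-coloring language. After the dipole construction the poles are vertices and the oriented segments are edges, so ``positive and negative poles alternate everywhere'' says exactly that this graph is bipartite, with one class the positive poles and the other the negative poles. It therefore suffices to prove that every segment of the dipole network joins a positive pole to a negative one. I would fix the direct current once and for all and color each diffluence positive, as prescribed; the current-reversal symmetry of property (2), which interchanges diffluences and confluences, then forces the dual convention that every confluence is negative. Each inserted fluence will be given the color opposite to that of the two branch points it separates.

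The core step is a local case analysis on a single oriented segment $u \to v$ of the original network, whose two endpoints are branch points. By definition the unique converging arc of a diffluence and the unique diverging arc of a confluence are the opposite segments, and the remaining $n$ (resp. $p$) arcs form the sheaf. Reading off, at each end, whether $u \to v$ is a sheaf-ray or the opposite segment yields four cases: when $u$ and $v$ are both diffluences the segment is a sheaf-ray at $u$ but the opposite segment at $v$; when both are confluences it is the opposite segment at $u$ but a sheaf-ray at $v$; and in the two mixed configurations it plays one and the same role at both ends (the opposite segment of both when a confluence precedes a diffluence, as the text notes). The real content is the resulting correlation: the two endpoints carry the same sign exactly when the segment is a sheaf-ray at one end and the opposite segment at the other, and opposite signs exactly when it keeps a single role throughout.

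Granting this correlation, the proposition follows at once. The dipole construction inserts a fluence precisely on the segments that play two different roles, that is, precisely on the same-sign segments, and gives the new pole the opposite sign; thus $D_1(+)\,D_2(+)$ is replaced by $D_1(+)\,F(-)\,D_2(+)$ and $C_1(-)\,C_2(-)$ by $C_1(-)\,F(+)\,C_2(-)$, while every mixed segment is left alone and already joins opposite signs. Since fluences arise only by insertion, no two of them are ever adjacent, so each is flanked by two branch points of the sign opposite to its own. Hence every segment of the final network, old or inserted, is bichromatic, no two like poles are adjacent, and the alternation holds along every path.

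I expect the main obstacle to be the orientation bookkeeping inside the four-case analysis: keeping track, in each configuration, of which arc is the single converging or diverging one, and hence whether the shared segment counts as a sheaf-ray or as the opposite segment. This is exactly what pins down the sign-versus-role correlation on which the whole argument rests; once it is secured, the reduction to bipartiteness and the verification that a single inserted pole restores alternation on a chain of like poles are routine.
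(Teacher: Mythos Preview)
Your argument is correct, but you should know that the paper offers no proof of this proposition at all: it is simply listed, together with the next four propositions, under the heading ``the network has the following new properties, which are deduced from the above,'' meaning the dipole construction of the preceding paragraph. What you have written is precisely the verification that this construction does what Pacotte claims, spelled out as a four-case local analysis on an oriented segment $u\to v$.

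Your case analysis is accurate. Reading off the role of the arc at each end gives: diffluence~$\to$~diffluence is sheaf-ray then opposite segment; confluence~$\to$~confluence is opposite segment then sheaf-ray; confluence~$\to$~diffluence is opposite segment at both ends (this is the case the text singles out); diffluence~$\to$~confluence is sheaf-ray at both ends. So the segment changes role exactly when the endpoints share a sign, and the insertion rule---which is triggered by a change of role, and assigns the new fluence the opposite sign---restores alternation precisely where it fails. The observation that inserted fluences are never adjacent (each sits strictly between two original branch points) closes the argument.

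Two minor remarks. First, your opening sentence casts the statement as bipartiteness; the paper makes exactly this identification a few lines later (Proposition~3.6, that the network is bi-chromatic with $\gamma(G)=2$), so your framing is faithful to the text even if it front-loads a conclusion Pacotte draws afterwards. Second, the paper's phrase ``consecutive diffluence and confluence'' is, as you implicitly noticed, the confluence~$\to$~diffluence case (opposite segment throughout); the other mixed case (diffluence~$\to$~confluence, sheaf-ray throughout) is not mentioned explicitly by Pacotte but is covered by your analysis and needs no insertion either.
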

\begin {prop}
The segments of the network are of two kinds: some are sheaves, others are opposite segments. \\
\end {prop}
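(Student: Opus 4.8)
The plan is to classify each segment by the local role it plays at each of its two endpoints, and then to show that, once the poles of fluence have been inserted, these two local roles always coincide, so that each segment is globally of exactly one kind. First I would fix the direct current, which costs nothing: by property~2 a global reversal exchanges every diffluence with a confluence but carries sheaf rays to sheaf rays and opposite segments to opposite segments (the $n$ outgoing rays of a diffluence become the $n$ incoming rays of a confluence, and the single incoming opposite segment becomes the single outgoing one), so the dichotomy to be proved is reversal-invariant. With the current fixed, at a diffluence the unique incoming arc is the opposite segment and the $n$ outgoing arcs are the sheaf rays; at a confluence the unique outgoing arc is the opposite segment and the $p$ incoming arcs are the sheaf rays; at a fluence, by definition, the single incoming and single outgoing arc merely transmit the current. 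Thus, read from its tail an arc is a sheaf ray when the tail is a diffluence, an opposite segment when the tail is a confluence, and \emph{transmitted} when the tail is a fluence; read from its head the roles of diffluence and confluence are interchanged, since there the arc is incoming.

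Next I would run the short case analysis over the types of the two endpoints of an arc. The two readings agree in every configuration except two: two consecutive diffluences, where the arc is a sheaf ray at its tail but the opposite segment at its head, and two consecutive confluences, where it is the opposite segment at its tail but a sheaf ray at its head. In all other cases the readings coincide or are not in conflict: a diffluence followed by a confluence gives a sheaf ray at both ends, a confluence followed by a diffluence gives an opposite segment at both ends, and an endpoint that is a fluence simply inherits the role imposed at the other endpoint.

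The key step is then to eliminate the two conflicting configurations, which is exactly what the dipole construction recalled above accomplishes: between two diffluences one inserts a pole of fluence, splitting the offending line into a sheaf ray and an opposite segment, and dually between two confluences one inserts a pole of fluence, in accordance with the preceding proposition on the alternation of poles. Because such a fluence is introduced only to separate two branching poles, every fluence is flanked on both sides by branching poles (two diffluences, or two confluences); consequently no arc can have a fluence at both endpoints, so the only remaining ambiguous case, \emph{transmitted/transmitted}, never actually occurs. Hence after the construction every arc falls into one of the consistent cases and is unambiguously either a sheaf ray or an opposite segment, which is the assertion.

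I expect the main obstacle to be purely a matter of orientation bookkeeping rather than of depth: one must keep straight that ``outgoing from a diffluence'' and ``incoming to a confluence'' are both sheaf roles, whereas ``incoming to a diffluence'' and ``outgoing from a confluence'' are both opposite-segment roles, so that the two genuinely conflicting adjacencies are isolated correctly and recognized as precisely those removed by the insertion of a pole of fluence.
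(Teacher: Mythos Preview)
Your proposal is correct and follows the same idea as the paper, only more explicitly. In the paper this proposition is not given a separate proof: it is simply listed among the properties ``deduced from the above,'' the ``above'' being precisely the dipole construction, whose stated purpose is ``so that each segment has its own role, the same for its two poles.'' Your endpoint-by-endpoint case analysis is exactly the verification that this construction achieves its aim: the two conflicting adjacencies you isolate (diffluence $\to$ diffluence and confluence $\to$ confluence) are precisely the ones Pacotte removes by inserting a pole of fluence, and your observation that no arc can then have a fluence at both endpoints disposes of the residual ambiguity. So there is no genuine difference in approach; you have simply written out the bookkeeping that the paper leaves implicit.
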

\begin {prop}
A suitable sign being assigned to the free ends, all the segments are {\it dipoles}. \\
\end {prop}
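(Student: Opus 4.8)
The plan is to read the statement as the assertion that, once the negative poles have been inserted as prescribed in the construction of the dipole network, every segment joins a positive pole to a negative one. A segment is a \emph{dipole} exactly when its two endpoints carry opposite signs, so I would first partition the segments into \emph{interior} segments, both of whose endpoints are already poles, and \emph{terminal} segments, one of whose endpoints is a free end carrying as yet no sign. The interior segments are to be handled by the results already proved, the terminal ones by a suitable choice of sign at the free ends.

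For an interior segment I would argue as follows. The construction of the dipole network inserts a negative pole (a point of fluence) on every line that would otherwise join two positive poles, and dually inserts a positive pole between two negative poles; consequently no segment of the subdivided network has both endpoints of the same sign. Invoking the alternation property already established---that positive and negative poles alternate everywhere---each interior segment therefore carries one positive and one negative endpoint and is a dipole. For a terminal segment, its interior endpoint is a pole of a determined sign, and I would assign to the free end the opposite sign; since each free end lies on exactly one segment, these assignments never conflict and the labeling is well defined. Each terminal segment then also joins poles of opposite sign, so together with the previous case all segments are dipoles.

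The delicate point is not the labeling of the free ends, which is mere bookkeeping, but making sure that the alternation borrowed from the preceding proposition is genuinely coherent throughout a network that, as the paper stresses, need not be a tree. Assigning alternating signs to the poles is a two-coloring condition, and globally it can be obstructed only by cycles of odd combinatorial length. Here the inserted fluence points do the work: since the construction never leaves two like-signed poles adjacent, a pole is placed between every two consecutive positive (resp. negative) poles, so that the number of poles met around any closed walk is forced to be even and the alternation closes up consistently. I would therefore finish by checking this parity effect of the subdivision, which reduces the global claim to the local, segment-by-segment statement proved above.
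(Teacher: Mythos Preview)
Your argument is correct and is essentially the spelled-out version of what the paper leaves implicit: the paper gives no separate proof of this proposition at all, merely listing it among the ``new properties which are deduced from the above'' immediately after the construction and Proposition~3.1 (alternation). Your decomposition into interior segments (handled by the alternation already established) and terminal segments (handled by the freedom to label free ends) is exactly the intended reading.

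One remark on your third paragraph: once you have invoked Proposition~3.1 as an established result, the global parity check you propose is redundant for the purpose of proving Proposition~3.3. The alternation of poles \emph{is} the two-coloring, and if it is granted everywhere then every interior segment is automatically a dipole; the bipartiteness/even-cycle discussion really belongs to the justification of Proposition~3.1 itself (and the paper later records it as Proposition~3.6 and Theorem~4.3), not to the present statement. So your proof is sound but slightly over-engineered: the first two paragraphs already suffice, and the third is commentary on a prior proposition rather than a needed step here.
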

\begin {dfn}
We will call {\it unit-dipoles} the sheaf rays and {\it connection dipoles} the opposite segments.
\end {dfn}

Knowing that one has defined as {\it positive} the pole which, for the direct current, is a diffluence, the conditions of the alternation make it possible to deduce further: \\
 
 \begin {prop}
 In a unit dipole, the current goes from the positive pole to the negative pole. \\
 \end {prop}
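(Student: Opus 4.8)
The plan is to treat this proposition as a direct unpacking of the preceding definitions together with the alternation property and the sign convention for poles. First I would recall that, by its definition, a \emph{unit dipole} is a sheaf ray, i.e.\ one of the arcs forming the sheaf of a diffluence or of a confluence. Since every segment of the network has already been shown to be a dipole, the two ends of this sheaf ray are poles; denote by $v$ the \emph{central} end, the common vertex of the sheaf where the diffluence or confluence occurs, and by $w$ the \emph{outer} end, which is a fluence point inserted by Pacotte's construction or another pole of the network. By the alternation property the poles $v$ and $w$ carry opposite signs, so it suffices to determine the sign of $v$ together with the direction of the current along the segment joining $v$ and $w$.

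Next I would split into the two possible types of central vertex. If $v$ is a diffluence, then by the convention that the positive poles are the diffluences of the direct current it is a \emph{positive} pole, and by the definitions of diffluence and of sheaf its sheaf rays are exactly the divergent arcs; hence the current along the unit dipole is directed away from $v$, that is, from the positive pole $v$ to the negative pole $w$. If instead $v$ is a confluence, its sheaf rays are the convergent arcs, so the current is directed toward $v$; in this case the conclusion follows once one shows that $v$ is a \emph{negative} pole, for then $w$ is positive and the current again runs from positive to negative. In both cases the current in the unit dipole flows from the positive pole to the negative pole, which is the assertion.

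The step I expect to be the main obstacle is precisely the confluence case, because the sign convention only \emph{defines} the positive poles to be the diffluences of the direct current and says nothing explicit about the sign of a confluence. To close this gap I would invoke the reversibility of the network: reversing all currents at once turns every confluence into a diffluence, so a confluence is exactly a diffluence of the reverse current; combining this with the rule that the sign of a pole follows from the direction of the current, and with the alternation of signs along the network, forces a confluence of the direct current to be a negative pole. A secondary point worth checking is that the outer end $w$ really is a pole of the opposite sign even when it is one of the inserted fluence points; but this is guaranteed by the insertion construction itself, since a fluence point placed between two positive poles is declared negative, so nothing beyond the alternation property is needed there.
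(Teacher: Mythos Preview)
Your argument is correct and follows the same line as the paper, which offers no formal proof but simply remarks, just before stating Propositions~3.4 and~3.5, that they are deduced from the convention that positive poles are the diffluences of the direct current together with the alternation of poles. Your case split on whether the sheaf centre is a diffluence or a confluence, and your discussion of why a confluence must carry the negative sign, spell out precisely what that one-sentence justification leaves implicit.
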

 \begin {prop}
 In a connecting dipole, the current goes from the negative pole to the positive pole.
 \end {prop}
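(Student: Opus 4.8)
The plan is to derive the statement directly from three ingredients already in hand: the definition of a connection dipole as an opposite segment, the convention fixing a positive pole as a diffluence of the direct current, and the alternation property asserting that the two poles of any segment carry opposite signs. I would fix the direct current once and for all and take an arbitrary connection dipole $e$ with endpoints $u$ and $v$. By the alternation property the two endpoints have opposite signs, so I may assume without loss of generality that $v$ is the positive pole and $u$ the negative one; the goal is then to show that the current on $e$ flows from $u$ to $v$.

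The core step is to read off the direction of the current from the role $e$ plays at whichever endpoint carries a confluence or a diffluence. Since $e$ is an opposite segment, it is incident to such a vertex in the opposite role. If the positive endpoint $v$ is the relevant diffluence, then by the definition of a diffluence its opposite segment is exactly the unique arc \emph{converging} toward $v$; hence the current on $e$ runs toward $v$, i.e. from $u$ to $v$. Dually, if the negative endpoint $u$ is a confluence, then its opposite segment is the unique arc \emph{diverging} from $u$, so the current again runs away from $u$, that is from $u$ to $v$. The remaining possibility, noted already in the visualisation, is that $e$ is simultaneously the opposite segment of a confluence at $u$ and of a diffluence at $v$; here both readings agree and once more give the direction $u \to v$. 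In every case the current flows from the negative pole $u$ to the positive pole $v$, which is the assertion.

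The only genuine subtlety --- and the step I would write out most carefully --- is the bookkeeping that links the flow picture (converging versus diverging) to the algebraic sign of the poles, together with the verification that the shared confluence--diffluence configuration is consistent rather than contradictory. This is precisely where the alternation property does the work: it forbids two diffluences (two positive poles) or two confluences (two negative poles) from bounding the same opposite segment, so the two local descriptions of the current can never conflict. Once this is in place the result is immediate, and it is worth observing that it is invariant under the global reversal of property 2, which interchanges diffluences and confluences while simultaneously flipping every sign; this invariance is exactly what makes the present statement the faithful dual of the companion proposition just established for unit dipoles, where the current runs from the positive pole to the negative pole.
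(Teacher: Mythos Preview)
Your argument is correct and is exactly the deduction the paper has in mind: the text does not spell out a proof of this proposition at all, but simply announces that, ``knowing that one has defined as {\it positive} the pole which, for the direct current, is a diffluence, the conditions of the alternation make it possible to deduce'' the two companion statements about unit and connecting dipoles. Your case analysis---reading the direction of the opposite segment at a diffluence (incoming) or at a confluence (outgoing) and matching this with the sign convention---is precisely that deduction made explicit.

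One small point worth tightening: you tacitly use that every connecting dipole has at least one endpoint which is a genuine diffluence or confluence (so that the ``opposite segment'' reading is available there), while the other endpoint may be one of the inserted fluence poles. This is true by construction---fluence points are inserted on an existing edge, so they never occupy both ends of a dipole---and your sentence ``since $e$ is an opposite segment, it is incident to such a vertex in the opposite role'' already gestures at it; stating it plainly would remove any residual ambiguity. With that clarification, your proof is a faithful and more detailed version of what the paper leaves implicit.
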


 We see that the permanence of the current's direction along a line formed by several dipoles is expressed by the alternation of the dipoles of the two kinds. We deduce that the use of alternating poles makes it possible to put in the background the ideas of current, of direct direction and of opposite direction. Therefore, the sign of the poles (positive or negative)  now makes the notation of the current's direction unnecessary. \\
 
 \begin {dfn}
 Today, we call {\it vertex coloring} the assignment of a color to each of the vertices of a graph $G(X, U)$, with $X$, set of vertices, $U$, set of arcs or edges, so that two adjacent vertices do not receive the same color. \\
 \end {dfn} 
 
 \begin {dfn}
 A graph is said to be {\it p-chromatic} if its vertices admit coloring in $p$ colors. We will call {\it chromatic number} $\gamma(G)$ of a graph $G$ the minimum number of distinct colors needed to color its vertices.
 \end {dfn}

 We deduce the following proposition: \\
 
 \begin {prop}
 A Pacotte tree network is a bi-chromatic graph. Its chromatic number $\gamma (G) = 2$.
 \end {prop}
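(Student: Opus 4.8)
The plan is to convert the alternation of poles, already established above, into an explicit proper $2$-coloring. First I would recall that, once the dipoles are in place, every vertex of the network $G(X,U)$ is a pole carrying a definite sign, positive or negative, and that by the proposition asserting that positive and negative poles alternate everywhere, the two endpoints of any segment carry opposite signs. This immediately suggests the bipartition $X = X^{+} \cup X^{-}$, where $X^{+}$ collects the positive poles and $X^{-}$ the negative ones.

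Next I would verify that this data yields a proper vertex coloring in the sense of the definition given above. Assigning colour $1$ to every vertex of $X^{+}$ and colour $2$ to every vertex of $X^{-}$, the alternation property guarantees that no segment joins two vertices of the same colour; hence the coloring is admissible and $\gamma(G) \le 2$. For the reverse inequality, I would note that any Pacotte network contains at least one dipole, i.e. one segment joining two poles of opposite sign; these two adjacent vertices cannot receive the same colour, so a single colour never suffices and $\gamma(G) \ge 2$. Combining the two bounds gives $\gamma(G) = 2$, which is exactly the assertion that $G$ is bi-chromatic.

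The delicate point, and the step I expect to be the real obstacle, is to justify that the \emph{local} alternation of signs around each pole extends to a \emph{globally} consistent assignment, i.e. that the two classes $X^{+}$ and $X^{-}$ are genuinely well defined on the whole network and not merely around a single junction. In a graph containing cycles this is not automatic: a locally alternating labelling closes up consistently around a cycle only when that cycle has even length, and a single odd cycle would force two same-signed poles to become adjacent, destroying the $2$-coloring. Here the content is precisely that Pacotte's insertion of a fluence (negative) pole between two consecutive positive poles, and dually a positive pole between two negative ones, was designed to rule this out: it forces every elementary cycle of the network to alternate in sign, hence to have even length, so that $G$ is bipartite in the graph-theoretic sense. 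Establishing this absence of odd cycles, equivalently the standard equivalence between the existence of a proper $2$-coloring, a bipartition $X = X^{+} \cup X^{-}$, and the non-existence of odd cycles, is the heart of the argument; once it is secured, the two colour-counting bounds above are routine.
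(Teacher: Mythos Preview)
Your approach is essentially the paper's: both take the positive poles and the negative poles as the two colour classes, relying on the earlier proposition that positive and negative poles alternate everywhere, so the paper's ``proof'' is just the one-line remark that positive $=$ white, negative $=$ black yields the required bicolouring. Your write-up is more careful (you add the lower bound $\gamma(G)\ge 2$ and worry about global consistency), but note that the ``delicate point'' you flag is not really an obstacle here: in Pacotte's construction the sign of a pole is a \emph{global} datum determined by the vertex type (diffluence/confluence versus inserted fluence), not a local label to be propagated, and the alternation proposition is already stated globally; so the bipartition $X=X^{+}\cup X^{-}$ is well defined from the outset and your concern about odd cycles is resolved by construction rather than needing a separate argument.
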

 
 In other words, we immediately deduce from the extremely general conditions previously posed the existence of a bicolouring of the network by taking, for example, positive poles as white points and negative poles as black points. As Pacotte anticipated, such a bicolouring makes it possible to get rid of any indication of orientation in the network which thus gains in generality.

On the graph, Pacotte intends to distinguish the connection dipoles from the unit dipoles in the following way: in a connection dipole, the line of the poles will effectively reach the two poles. It will neither reach them in a unit dipole. This leads to the following configurations (see Fig. 4).

\begin {figure} [h]
\centering
\vspace {0 \baselineskip}
\includegraphics [width = 2.5in] {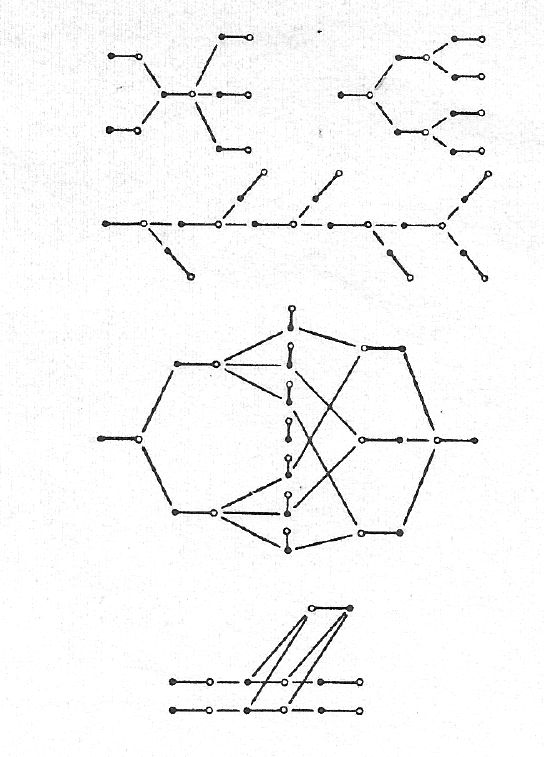}
\vspace {-1 \baselineskip}
\caption {Examples of Pacotte networks (from \cite{Pac8}, 9)}
\label {fig: Elements3}
\end {figure}

  For clarity, however, we will restore the direction of the current in the 4$^{th} $ diagram and number the vertices. We thus obtain the network of Fig. 5, which will further serve as a reference network. This figure perfectly illustrates the notion of "circulatory network" put forward by Pacotte at the beginning of his work (see \cite{Pac8}, 4).
  
  \begin {figure} [h]
\centering
\vspace {0 \baselineskip}
\includegraphics [width = 4in] {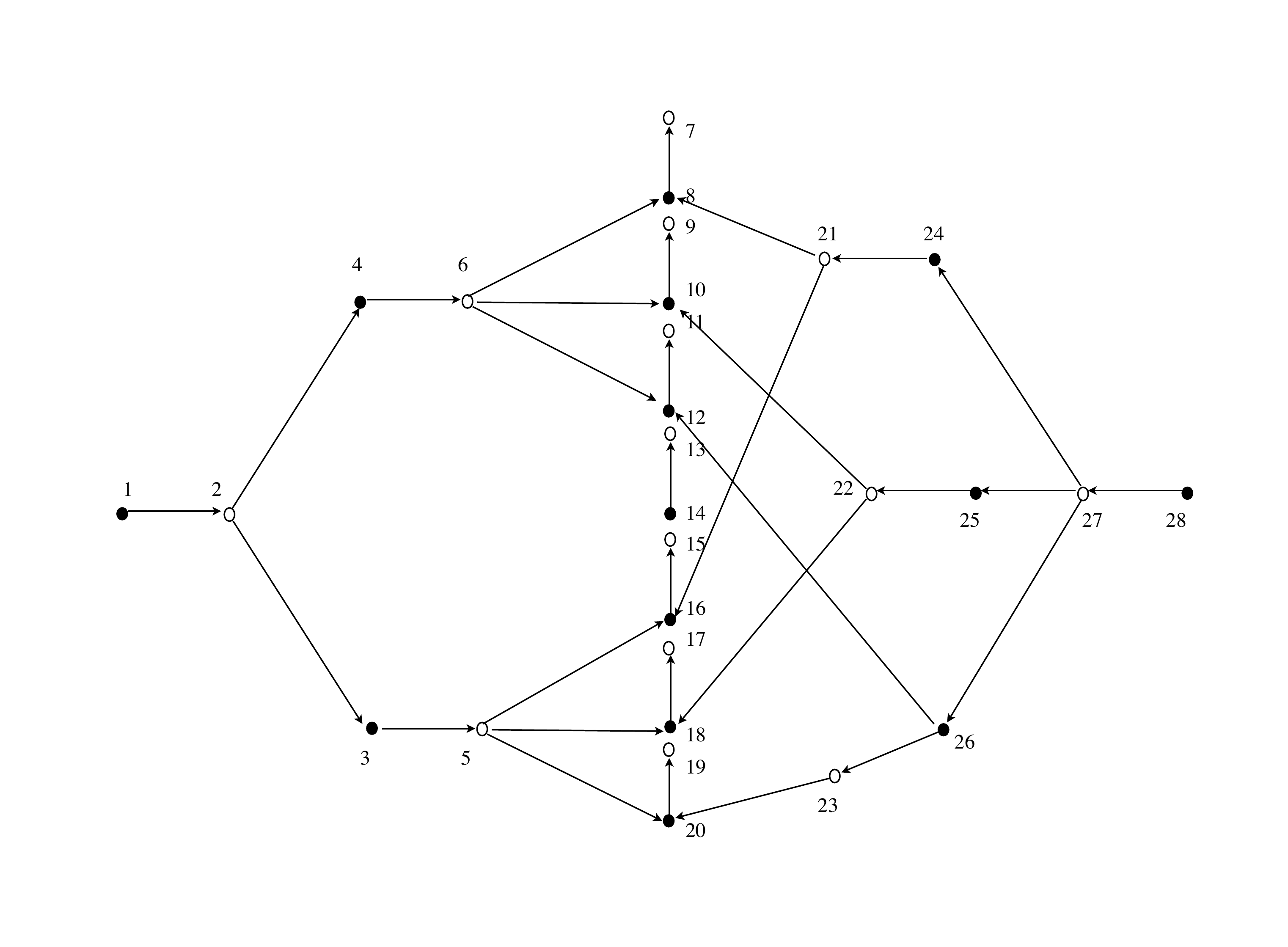}
\vspace {-1 \baselineskip}
\caption {Pacotte network with currents}
\label {fig: Currents4}
\end {figure}

   The convention of the alternation of poles however leads to a more advanced schematization which is the alternation of the dipoles. As Pacotte shows, an alternating chain of these entities thus presents 4 types of objects: 2 types of poles and 2 types of segments. On the set $X = \{1, 2, 3, ... \}$ of the vertices of the network, we can then play two types of operators: the operator of cyclic permutation of the points and that of the reversal of the current's direction. The combination of the two creates the dihedral group.
   
 \subsection {Welds, sheaves and chains}
 
   If we recapitulate, the polarized network is therefore presented as follows (see \cite{Pac8}, 11) :
   
   \begin {enumerate}
   
   \item The network has two types of poles (positive and negative) and two types of dipoles (unit dipoles and connecting dipoles). We admit that the poles always belong to the dipoles and we call {\it welded joint} or, more briefly, {\it weld} ("soudure" in French), the system of two or more poles of the same sign.
   \item A {\it sheaf} is a system of unit dipoles assembled by a single weld; it is said to be positive or negative depending on the sign thereof.
   \item A {\it chain} is a system of arbitrary dipoles, the welds of which join two dipoles. A chain of unit dipoles has alternately positive and negative sheaves (two unit dipoles). A chain in which dipoles of the two types alternate (or {\it alternated chain}) has no sheaves.
   \item A polarized network is a system of dipoles, characterized as follows: each connecting dipole can attach, in each of its poles, one or more unit dipoles, but no connecting dipole; each unit dipole attaches, in each of its poles, a connecting dipole and only one. There is no connecting dipole of which no pole would be a weld.
   \end {enumerate}

Here is an illustration of these concepts from the diagram in Fig. 5.
 
-- The pairs of vertices (1, 3) or (1, 4), (2, 4) or (2, 5) are {\it welds}.

- The system of unit dipoles (2-3), (5-16), (5-18), (5-20), assembled by a single weld, the positive weld (2-5), is a {\it positive sheaf}.

- The {\it chain} (28-27, 27-24, 24-21, 21-8, 8-7) is a system of arbitrary dipoles, comprising both unit dipoles (27-24, 21-8 ) and connecting dipoles (28-27, 24-21, 8-7).

- The chain (23-20, 20-5, 5-16, 16-21, 21-8) is a {\it chain of unit dipoles}. It has alternately positive (23-20, 5-16, 21-8) and negative (20-5, 16-21) sheaves.

- The chain (28-27, 27-25, 25-22, 22-10), where alternate connecting dipoles and unit dipoles, does not have a sheaf because its system of unit dipoles (27-25, 22-10) is only joined by a single connecting dipole (22-25) comprising only a single positive pole (22) (resp. a single negative pole (25)). This is not enough to make a positive (resp. negative) sheaf since two poles with the same sign are required for this.

\subsection {A graph foundation for formal thinking}

In Pacotte's mind, the previous formalization is not a system of conventions among others but "an inevitable formal object, originally suitable for the schematization of qualitative ramifications which are a universal aspect of intimate reality" (see \cite{Pac8}, 11). In other words, the previous axiomatic would present an "irreducible evidence" which is in fact a true foundation of formal, logical and mathematical thought. The polarized network is therefore not a structure which would fit into it or be deduced from it. It's the opposite. Logical-mathematical structures, from the simplest to the most complex ones, must be deduced from such a network. Thus "any collection, any number is a sheaf"(see \cite{Pac8}, 11).

Despite these assertions – which must, once again, be understood in the context of a time when neither the theory of graphs, nor that of partial orders are yet fully developed – we will continue to reinterpret Pacotte's network in light of modern mathematical concepts and structures.

\section {Domain, layer, concordance}

In chapter 2 of his work (see \cite{Pac8}, 12 sq.), Pacotte comes to introduce the following new concepts:

\subsection {Domains and cocycles}

\begin {dfn}
A region of the network which, from a pole, is traversed by a current of the same direction is called the {\it domain} of the pole.
\end {dfn}
 
Given a subset $A \subset X $ of vertices of the graph $G (X, U)$, the domain of $ A = \{s \} $, where $ A $ is reduced to singleton $ \{s \} $ and which we will more simply denote dom($\{s \}$), therefore includes currents of the same direction coming from $ \{s \} $, that is to say, generally, and for any network, "first a sheaf, then a tree"\ (see \cite{Pac8},12).

This notion of {\it domain} is not obvious to translate into graph theory. For clarity, let us first recall here some definitions familiar to graph theorists. \\

\begin {dfn} [cocycles]
For any region $A$ of a graph $ G (X, U) $, we define:

- $ \omega^+ (A) $: the set of arcs having their initial end in $ A $ and their terminal end in $ \bar {A} = X - A $;

- $ \omega^ - (A) $: the set of arcs having their terminal end in $ A $ and their initial end in $ \bar {A} = X - A $;

We note: $\omega (A) = \omega^ + (A) + \omega^ - (A) $, the set of arcs or edges called  {\it cocycle} of the graph. 
\end {dfn}

These definitions being recalled, we could express today Pacotte's conception as follows. Starting from a connected oriented 1-graph $ G (X, U) $, his approach amounts to fixing certain conditions on the {\it cocycles}. Suppose that $ A = \{s \} $ has only one vertex. It will be said that:

\begin {enumerate}
\item $ \omega ^ + (A) $ is a {\it diffluence} if $ | \omega^+ (A) | = | \omega (A) -1 | $ (in other words, if $ | \omega ^ - (A) | = 1) $;
\item $ \omega ^ - (A) $ is a {\it confluence} if $ | \omega^- (A) | = | \omega (A) -1 | $ (in other words, if $ | \omega ^ + (A) | = 1) $;
\item $ \omega (A) $ is a {\it fluence} if $ | \omega^+ (A) | = | \omega ^ - (A) | = 1 $ (in other words, if $ | \omega (A) | $ = 2).
\end {enumerate}

Thus, in the graph of Fig. 4, for $ A = \{2 \} $, we have $ \omega^ + (\{2 \}) = \{4 \} $. Then put $ B = \{6 \} $, we now have $ \omega^+ (\{6 \}) = \{8, 10, 12 \} $. Therefore, dom($ \{2 \}) = \omega^+ (\{2 \}) \cup \omega^+ (\{6 \}) = \{4, 8, 10, 12 \} = \omega^+ (A, B) $. \\

\begin {dfn}
In the language of Pacotte, a network which presents only diffluences (resp. confluences) is a {\it tree}.
\end {dfn}

This definition, which amounts to defining a tree (or a rooted tree) by its cocycles, is dual from the definitions of graph theory. Indeed, in graph theory, we call {\it tree} a connected graph without cycles and {\it rooted tree} (graph $ \mathcal {F} = (X, T) $), a tree with root $ r \in X $ and such that, for any vertex $ j \in X $, there exists in $T$ a path going from $ r $ to $ j $. This does not mean that every Pacotte network is a tree (or a rooted tree). Some may actually have {\it cycles}. \\

\begin {dfn}
In a graph $ G (X, U) $, a {\it cycle} is a sequence of arcs:
\[
\mu = (u_ {1}, u_ {2}, ..., u_ {q}),
\]
such that:

(1) any arc $ u_{k} $ (with $ 1 <k <q $) is connected by one of its ends to the previous $ u_{k-1} $ and by the other end to $ u_{k + 1 } $ (in other words, it's a chain);

(2) the sequence does not use the same arc twice;

(3) the initial vertex and the terminal vertex of the chain coincide.

An {\it elementary} cycle also checks:

(4) in traversing the cycle, one meets the same vertex only once (except of course the initial vertex which coincides with the final vertex).
\end {dfn}

We then arrive at the following theorem, which shows the gross error made by Deleuze and Guattari: \\

\begin {thm}
A Pacotte tree network contains a cycle if one of its connected components of cardinal $ n $ contains $ n $ edges or more.
\end {thm}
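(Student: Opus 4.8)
The plan is to reduce the statement to the classical graph-theoretic fact that a finite connected graph on $n$ vertices carrying at least $n$ edges must contain a cycle, and to establish this by contraposition. The structural fact on which everything rests is that a finite connected acyclic graph — a tree in the ordinary sense recalled just above — on $n$ vertices has exactly $n-1$ edges. First I would isolate the connected component $C$ at issue, with vertex set of cardinality $n$ and edge set of cardinality $m \geq n$. Since containing a cycle is an internal property of $C$, and $C$ sits as a subgraph inside the Pacotte network $G(X,U)$, it suffices to exhibit a cycle within $C$ itself.

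The key lemma is: any finite connected graph without cycles on $n$ vertices has exactly $n-1$ edges. I would prove it by induction on $n$. The base case $n=1$ is immediate, there being no edges. For the inductive step I would first show that a finite connected acyclic graph with at least two vertices possesses a vertex of degree $1$, a \emph{leaf}: taking a chain of maximal length in the graph, its terminal vertex can have no neighbour outside the chain, otherwise the chain would extend, and no second neighbour on the chain, otherwise a cycle in the sense of the definition above would appear; hence it has degree $1$. Deleting this leaf together with its unique incident edge leaves a connected acyclic graph on $n-1$ vertices, which by the induction hypothesis has $n-2$ edges; restoring the leaf and its edge yields $n-1$ edges, as claimed.

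It then suffices to conclude by contraposition. If the component $C$ contained no cycle, then, being connected and acyclic, it would be a tree and, by the lemma, would carry exactly $n-1$ edges, contradicting $m \geq n > n-1$. Hence $C$ contains a cycle, and so does the network. Equivalently, and more constructively, I would invoke a spanning tree $T$ of $C$ (which exists precisely because $C$ is connected): $T$ has $n-1$ edges, so since $m \geq n$ at least one edge $e$, with ends $x$ and $y$, lies outside $T$; the unique chain of $T$ joining $x$ to $y$, adjoined to $e$, forms an elementary cycle of $C$ in the sense of the definition given above.

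The only genuinely non-routine point is this leaf-existence and edge-counting lemma for trees; the remainder is bookkeeping. I expect the main obstacle to lie in making the maximal-chain argument fully rigorous — guaranteeing, via the finiteness of the component, that a chain of maximal length exists, and correctly ruling out the cycle-creating alternative — and, should I prefer the spanning-tree route, in justifying the existence of the spanning tree, which however reduces to the very same acyclicity and connectivity considerations.
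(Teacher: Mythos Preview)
Your proposal is correct and, in its spanning-tree variant, matches the paper's proof almost verbatim: the paper takes a spanning tree $A$ of the component $C$, notes it has $n-1$ edges, picks an edge $a=(s,t)$ of $C$ not in $A$, and closes the cycle with the unique path in $A$ from $s$ to $t$. Your contrapositive argument via the leaf-and-edge-count lemma is a sound and slightly more self-contained alternative, but the paper does not go that route.
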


\begin {proof}
Let $ G $ be a Pacotte network, $ C $ a connected component of $ G $ of cardinal $ n $ (i.e. having $ n $ vertices) and $ A $ a tree covering for this component. $ A $ has $ n-1 $ edges. If $ C $ has $ n $ edges, then there is $ a = (s, t) \in C $, which is not in $ A $. There is therefore a path $ c $, from $ s $ to $ t $, in $ A $ which does not contain $ a $. And, therefore, $ c $ {\it plus} $ a $ forms a cycle of $ C $. To find out if $ G $ has cycles, we therefore consider each component. If one of them contains $ n $ or more edges, then $ G $ contains a cycle.

\begin {figure} [h]
\centering
\vspace {-2 \baselineskip}
\includegraphics [width = 4in] {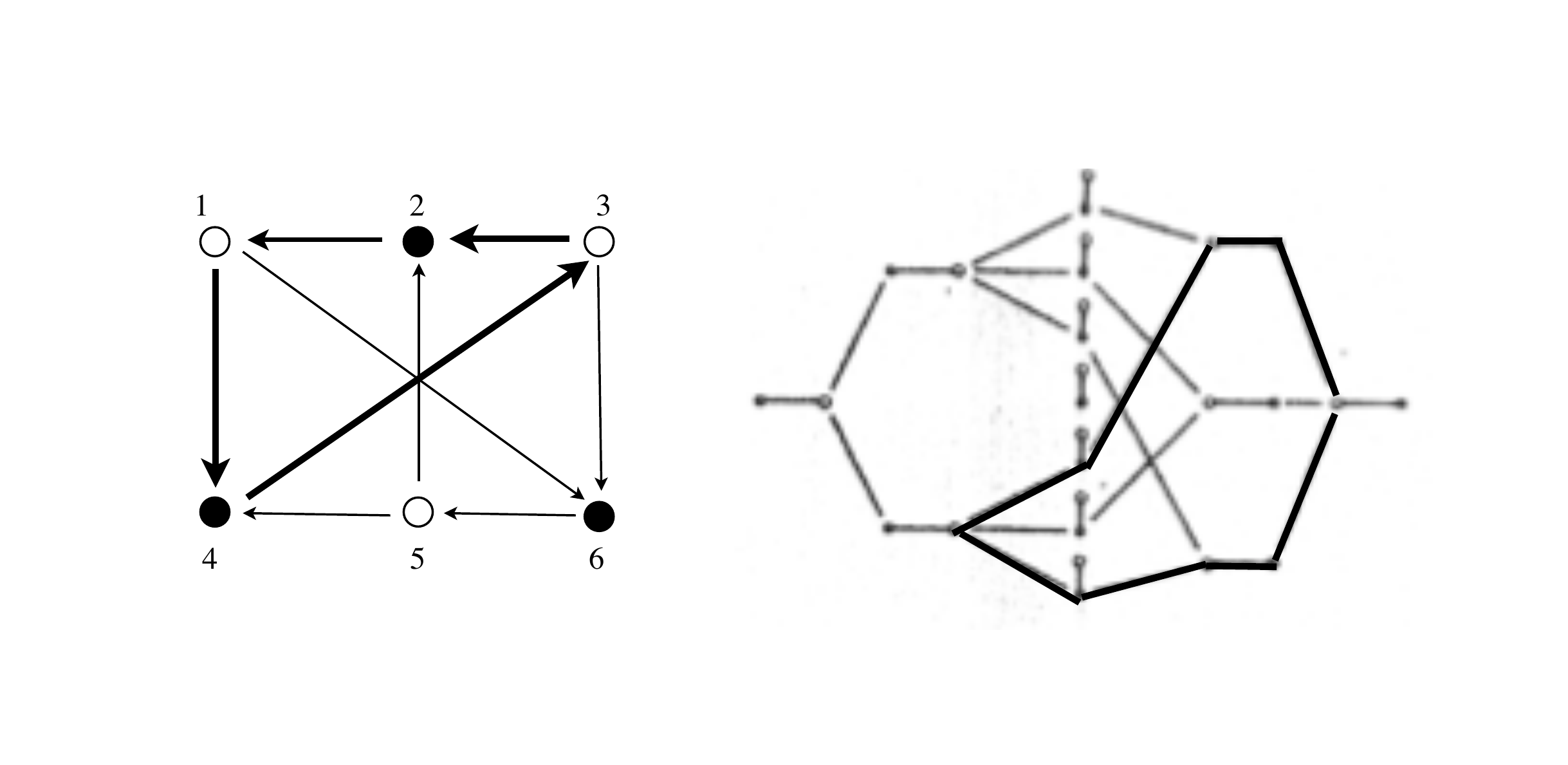}
\vspace {-2 \baselineskip}
\caption {Even length cycles in a Pacotte network}
\label {fig: Cycle5}
\end {figure}

In Fig. 6, examples of cycles are shown in bold. This makes it clear that a Pacotte network is not necessarily a tree or a rooted tree.
\end {proof}

Now remember the following definition: \\

\begin {dfn}
We call {\it circuit} a cycle $ \mu = (u_ {1}, u_ {2}, ..., u_ {q}) $ such that, for all $ i <q $, the terminal end of $ u_ {i} $ coincides with the initial end of $ u_ {i + 1} $. \\
\end {dfn}

\begin{thm}
A necessary and sufficient condition for a Pacotte network $ G (X, U) $ to be circuit-free is that
any non-empty subset of vertices $ A \subset X $ admits at least one element whose
all the predecessors are in $ \bar{A} $, the complement of $ A $. In other words, the subgraph
$ G_{A} $ has at least one vertex without predecessor.
\end {thm}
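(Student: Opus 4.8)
The plan is to recognise this statement as the classical characterisation of circuit-free (acyclic) digraphs and to prove the stated equivalence by treating each implication through its contrapositive, which keeps the argument purely combinatorial. Throughout I would use that a \emph{predecessor} of a vertex $x$ is a vertex $y$ joined to $x$ by an arc $y \to x$, so that the condition ``all predecessors of $s$ lie in $\bar{A}$'' means exactly that $s$ has no predecessor inside $A$, i.e. that $\omega^-(\{s\})$ meets no vertex of $A$ and $s$ is a vertex without predecessor in the induced subgraph $G_A$.

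For the sufficiency (the condition forces $G$ to be circuit-free) I would argue contrapositively. Suppose $G$ contains a circuit $\mu = (u_1, \ldots, u_q)$ and let $A$ be the non-empty set of vertices it traverses. By the definition of a circuit, the terminal end of each $u_i$ coincides with the initial end of $u_{i+1}$, so every vertex of $A$ is the head of an arc of $\mu$ whose tail is again a vertex of $A$. Hence each vertex of $A$ has a predecessor inside $A$, so $A$ is a non-empty subset possessing no vertex all of whose predecessors lie in $\bar{A}$, contradicting the hypothesis. Therefore the condition rules out every circuit.

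For the necessity (circuit-free forces the condition) I would again use the contrapositive. Suppose some non-empty $A \subset X$ violates the condition, so that \emph{every} vertex of $A$ admits at least one predecessor in $A$. I would then build a backward walk: choose $x_0 \in A$, then $x_1 \in A$ with an arc $x_1 \to x_0$, then $x_2 \in A$ with an arc $x_2 \to x_1$, and so on, each choice being available precisely because the condition fails on $A$. Since $X$, and hence $A$, is finite, the sequence $x_0, x_1, x_2, \ldots$ must eventually repeat a vertex, say $x_i = x_j$ with $i < j$ chosen as the first repetition. The arcs running $x_j \to x_{j-1} \to \cdots \to x_{i} = x_j$ then form a circuit of $G$. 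Thus a violation of the condition produces a circuit, which is exactly the contrapositive we require.

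The only genuinely delicate point is this finiteness step: one must invoke the finiteness of $X$ to guarantee that the backward walk repeats a vertex, and then verify that the closed segment trimmed between the two equal indices is a \emph{bona fide} circuit in the sense defined earlier, namely that each of its arcs is traversed in its own orientation and that, taking the first repetition, the intermediate vertices $x_j, x_{j-1}, \ldots, x_{i+1}$ are distinct so that no arc is reused. The small bookkeeping to watch is that the walk is constructed backward while the resulting circuit is read forward; once that is checked, the equivalence, together with its reformulation in terms of $G_A$ having a vertex without predecessor, follows at once.
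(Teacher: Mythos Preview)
Your proof is correct and follows essentially the same approach as the paper: both directions are argued by contrapositive, using the vertex set of a circuit to witness a failing $A$ in one direction, and iterating the predecessor map inside a failing $A$ until a vertex repeats to produce a circuit in the other. You are in fact slightly more careful than the paper in making the finiteness of $X$ explicit and in taking the \emph{first} repetition to ensure the closed walk is a genuine circuit.
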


\begin{proof}
Suppose that $ G (X, U) $ has a circuit [$ x_ {0}, x_ {1}, ..., x_ {r} $]. Let $ A = \{x_ {0}, x_ {1}, ..., x_ {r} \} $ and
consider the subgraph $ G_ {A} $: any vertex of $ G_ {A} $ belonging to circuit has at
least a predecessor in $ A $. There is therefore a set $ A $ for which the
property is false. Conversely suppose $ G (X, U) $ without circuit and the property
false. There is therefore a graph $ G_ {A} $ whose all vertices have at least one
predecessor in $ A $. Let's start from $ x_ {i_ {0}} $ in $ G_ {A} $. $ x_ {i_ {0}} $ has a predecessor $ x_ {i_ {1}} $, ..., $ x_ {i_ {n-1}} $ has a predecessor $ x_ {i_ {n}} $. We can therefore build a path [$ x_ {i_ {n}}, x_ {i_ {n-1}}, ..., x_ {i_ {0}} $]. Since the graph has $ n $ distinct vertices, two vertices of this path are identical. This graph
therefore has a circuit, which is absurd.
\end{proof}

In the general case, a Pacotte network can however admit circuits. This is not the case with the network of Fig. 5, but this is the case of the network of Fig. 6 (left) which admits several circuits: (3, 2, 1, 4, 3), (3, 6, 5, 4, 3), (3, 6, 5, 2, 1, 4, 3), ( 3, 2, 1, 6, 5, 4, 3). \\

\begin {thm}
When a Pacotte network admits cycles or circuits, these are necessarily of even length.
\end {thm}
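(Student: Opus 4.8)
The plan is to derive the result directly from the bipartite structure already established, namely the Proposition asserting that a Pacotte tree network is a bi-chromatic graph with $\gamma(G) = 2$. The crucial structural fact is that the two colour classes may be taken to be the positive poles (white points) and the negative poles (black points), and that by construction every dipole of the network --- whether a unit dipole or a connecting dipole --- joins a positive pole to a negative pole. Equivalently, the sign labelling $c : X \to \{+,-\}$ is a proper $2$-colouring: no edge of $G$ has both endpoints of the same sign.

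First I would fix such a proper $2$-colouring $c$ and consider an arbitrary cycle $\mu = (u_1, u_2, \ldots, u_q)$ of $G$, writing its vertex sequence as $x_0, x_1, \ldots, x_q$ with $x_q = x_0$. Since each arc $u_i$ has $x_{i-1}$ and $x_i$ as its two ends and joins poles of opposite sign, we have $c(x_i) \neq c(x_{i-1})$ for every $i$. Hence the signs alternate along the cycle, and a trivial induction gives $c(x_i) = c(x_0)$ exactly when $i$ is even.

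Next I would read off the parity. Because the cycle closes up, $x_q = x_0$ and therefore $c(x_q) = c(x_0)$; by the alternation just established this forces the index $q$ to be even. Thus $\mu$ has even length, which is the claim for cycles. For circuits the argument is identical: a circuit is, by the Definition recalled above, a particular kind of cycle (one in which the orientations of consecutive arcs agree), and the parity of its length depends only on the underlying chain of vertices, not on their orientation. Hence circuits, too, have even length.

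The argument has no serious obstacle, since it is the easy direction of the classical equivalence ``bipartite $\iff$ all cycles even''. The only point that deserves care is the justification that the positive/negative labelling is a genuine global proper $2$-colouring rather than a merely local alternation --- but this is precisely the content of the earlier Proposition that the poles alternate everywhere and that $\gamma(G) = 2$, which I am entitled to assume. Granting it, the even length of every cycle and every circuit is immediate.
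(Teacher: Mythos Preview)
Your proof is correct and follows essentially the same approach as the paper: both invoke the already established bipartite (bi-chromatic) structure of the Pacotte network and use the standard fact that a bipartite graph has no cycles of odd length. The only difference is that the paper simply cites this fact in one line, whereas you spell out the alternation-of-signs argument explicitly.
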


\begin {proof}
Pacotte's network is a bipartite graph. Now a bipartite graph does not have cycles of odd length.
\end {proof}

Let us now take the point of view, no longer of currents, but of dipoles. To traverse the domain by following the direction of the current amounts going from the connecting dipole associated with the initial pole to its unit dipole. We admit with Pacotte (see \cite{Pac8}, 13) that the connecting dipole, which belongs to the chosen pole, already belongs to the domain. This is the {\it initial dipole}. If the chosen pole is a free pole (which is not a weld), then it ends with a connecting dipole and, in this case, the domain is reduced to this very dipole. Otherwise, the connecting dipole is linked to other connecting dipoles, each time through a unit dipole.

\subsection {The notion of {\it layer}}

Pacotte then turns his attention to the moving front of the domain. “Since it is necessary to take, in each pole with the same sign as the given pole, finally the whole sheaf, suppose that we take it each time at once; and since it is necessary to continue each line initiated by the sheaf, up to the pole with the same sign as its center, suppose that, in one fell swoop, we take the sheaf and the connecting dipoles which immediately extend it. Under these conditions, the domain advances by well-defined leaps; however, successive front lines are not imposed. The first leap is imposed; the second will start from any line of the sheaf; the third will start from another line of the first sheaf or from any line of the sheaf generated on the second leap; And so on. We call {\it layer} of the domain, a system of poles which defines a possible front of the domain. The layer is made up of poles with the same sign as the domain pole: it has the sign of this pole” (see \cite{Pac8}, 13).

To make sense of this notion of {\it layer} from the point of view of graph theory, new definitions should be introduced. \\

\begin {dfn}
Given a graph $ G (X, U) $, we associate with each arc $ u \in U $ a number noted $ \ell (u) \in \mathbb {R} $ and called "length of the arc". We say that $ G $ is {\it valued} by the lengths $ \ell (u) $. If $ u = (i, j) $ we can also use the notation $ \ell_{ij} $ for the length of the arc $ u $.
\end {dfn}

In the case of the Pacotte network, we can associate with any dipole $ (i, j) $ the same length $ \ell_ {ij} = 1 $.

 We will now consider a Pacotte network $ R $ without circuits, provided with a root 1. Under these conditions, we can introduce the following {\it rank} function: \\

\begin {dfn}
The {\it rank} function associated with a graph without circuit of root 1 is obtained by associating, with each vertex $ i \in X $ of the graph, a positive integer $ r (i) $ called {\it rank} of the vertex $ i $ such as:
\[
r (1) = 0,
\]
$ r (i) $ being the number of arcs in a path of maximum cardinality between 1 and $ i $.
\end {dfn}

We can then decompose the graph into levels, each level corresponding to the vertices having the same rank. In the case of the Pacotte network, the poles of the same rank are called {\it normal layers}. As Pacotte explains, they form an ordered whole. However, between any two layers of a domain, even a tree, there is, in general, no order relation. We therefore arrive at the following proposition: \\

\begin {prop} [Pacotte]
The set of layers of the Pacotte network is not, in general, a total order.
\end {prop}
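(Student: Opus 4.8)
The plan is to make precise the order relation implicitly attached to the notion of a moving front, and then to exhibit two layers that this relation fails to compare. First I would fix the order on the set of layers of a given domain: I declare $L \preceq L'$ whenever the front $L'$ can be reached from the front $L$ by a (possibly empty) succession of the forward leaps described above; equivalently, whenever $L'$ is everywhere at least as advanced as $L$, in the sense that each pole of $L$ is an ancestor of, or equal to, the pole of $L'$ lying on the same current line. This relation is reflexive, antisymmetric and transitive, so the layers of a domain always form a partial order; the only issue is whether this order is total.

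Next I would recall why the \emph{normal} layers cause no trouble. Two normal layers, consisting respectively of all poles of rank $k$ and all poles of rank $k'$ with $k < k'$, always satisfy $L_k \preceq L_{k'}$, because the rank function advances uniformly across every line of the domain. This is exactly Pacotte's observation that the normal layers ``form an ordered whole'', and it localises the potential failure of totality: it can only arise from fronts that advance unequally on distinct lines.

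The decisive step is therefore to build such unequal fronts inside the smallest domain carrying a genuine ramification. I would take a root $1$ issuing a diffluence into two lines, $1 \to a_1 \to a_2$ and $1 \to b_1 \to b_2$, and consider the two fronts $L = \{a_2, b_1\}$ and $L' = \{a_1, b_2\}$. Both are legitimate layers, each obtainable from the common front $\{a_1, b_1\}$ by a permissible sequence of leaps. But $L \not\preceq L'$, since $a_2$ is a strict descendant of $a_1$ and so $L$ is more advanced than $L'$ on the first line; symmetrically $L' \not\preceq L$, since $b_2$ is a strict descendant of $b_1$. Hence $L$ and $L'$ are incomparable and the order is not total. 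As this example already lives in a network with diffluences only, the statement holds ``even [for] a tree'', precisely as claimed.

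I expect the genuine obstacle to be only the first step, namely extracting a well-defined order from Pacotte's informal ``front'' language, because once comparability is identified with one front dominating the other on every line, the failure of totality is an immediate consequence of any branching, and the two-line example above already settles the proposition.
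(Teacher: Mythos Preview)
Your argument is correct, but it is worth noting that the paper itself offers essentially no proof of this proposition: it is stated as an observation attributed to Pacotte, with the surrounding text merely asserting that ``between any two layers of a domain, even a tree, there is, in general, no order relation'', and contrasting this with the normal layers, which ``form an ordered whole''. There is no counterexample given and no order relation made explicit.

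What you have done, then, is supply the content the paper omits. Your first step (pinning down $\preceq$ as domination on every current line, equivalently reachability by forward leaps) is precisely the clarification the paper's informal ``front'' language needs, and your two-line counterexample $L=\{a_2,b_1\}$, $L'=\{a_1,b_2\}$ is the minimal witness to incomparability. This example also makes good on the paper's parenthetical ``even a tree'', since it lives in a pure diffluence. So your route is not so much \emph{different} from the paper's as it is a completion of it: the paper gestures at the phenomenon (unequal advancement on distinct branches), and you exhibit it concretely. One small remark: in your example you should check that $\{a_2,b_1\}$ and $\{a_1,b_2\}$ really are layers in Pacotte's sense, i.e.\ that the poles carry the same sign as the domain pole; this holds provided the $a_i,b_i$ are taken at even distance from the root along each line, which is automatic once you remember that a ``leap'' in Pacotte's description traverses a sheaf together with its terminating connecting dipoles and so always lands on poles of the correct sign.
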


It is therefore only on the set of normal layers that we can establish an order relation. \\

\begin {dfn}
Let us call $\Gamma $ the multivalued mapping which, with any element of $ X $ associates a part of $ X $ (i.e. an element of $ P (X) $). \\
\end {dfn}

\begin {thm} [\cite{Gon}, 44]
In any circuit-free graph, there exists at least a vertex $ i $ such as $ \Gamma^ {- 1} _ {i} = \emptyset $. \\
\end {thm}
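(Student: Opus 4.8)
The plan is to obtain this as an immediate specialization of the necessary and sufficient condition for circuit-freeness proved above. That criterion asserts that $G(X,U)$ is circuit-free if and only if every non-empty subset $A \subset X$ contains at least one vertex all of whose predecessors lie in $\bar{A} = X - A$. Since $G$ is assumed circuit-free and $X$ itself is a non-empty subset of $X$, I would simply take $A = X$. Then $\bar{A} = \emptyset$, and the criterion furnishes a vertex $i$ all of whose predecessors belong to $\emptyset$; equivalently, $i$ has no predecessor at all, that is $\Gamma^{-1}_{i} = \emptyset$, which is exactly the claim.

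Should a self-contained argument be preferred, I would argue by contradiction with a backward walk, exactly as in the proof of the circuit-free criterion. Assume every vertex has a predecessor, i.e. $\Gamma^{-1}_{i} \neq \emptyset$ for all $i \in X$. Choose any $x_0$; it has a predecessor $x_1$, which has a predecessor $x_2$, and so on, producing a sequence $x_0, x_1, x_2, \dots$ with an arc from $x_{k+1}$ to $x_k$ for every $k$. Because $X$ is finite, say $|X| = n$, the $n+1$ vertices $x_0, \dots, x_n$ cannot all be distinct, so $x_p = x_q$ for some $p < q$. The arcs joining consecutive terms between indices $p$ and $q$ then form a closed sequence traversed entirely along the orientation, hence a circuit, contradicting the hypothesis that $G$ is circuit-free.

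I expect no real obstacle here, since the statement is essentially the contrapositive of the backward-reachability argument used earlier; the only point requiring care is the appeal to the finiteness of $X$, which is what guarantees that the backward walk must repeat a vertex and thereby close up. I would also stress that the repeated vertex must yield an oriented \emph{circuit} (each arc traversed in its natural direction), not merely a cycle: it is this orientation-respecting closure that both delivers the contradiction and distinguishes the present conclusion from the even-length cycle observations made before.
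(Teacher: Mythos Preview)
The paper does not actually supply a proof of this theorem: it is simply quoted from Gondran--Minoux (\cite{Gon}, p.~44) and immediately followed by the definition of an antibase. So there is no ``paper's own proof'' to compare against.

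Your proposal is correct on both counts. The first route---specializing the circuit-freeness criterion (the theorem just above, proved in the paper) to $A = X$, so that $\bar A = \emptyset$ and the guaranteed vertex has no predecessor---is the most economical argument available in this context and fits the paper's logic perfectly. One small caveat: that criterion is stated in the paper for a \emph{Pacotte network} rather than an arbitrary graph, but its proof uses nothing beyond finiteness of $X$, so the specialization is legitimate for any finite directed graph; you might note this explicitly. Your second, self-contained argument is the standard backward-walk pigeonhole proof and is exactly the kind of reasoning Gondran--Minoux give; your emphasis on obtaining an oriented \emph{circuit} (not merely a cycle) and on the role of finiteness is apt.
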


\begin {dfn}
We call $ {\it antibase} $ of any graph a set $ A \subset X $ satisfying the following two conditions:

(1) $ \forall j \in \bar{A}, \exists i \in A $ such that there is a path from $ j $ to $ i $;

(2) There are no paths between two vertices of the antibase.
\end {dfn}

In a circuit-free graph, the set of vertices having no successor forms the only antibase. If the antibase is reduced to a single element, it will be called the {\it antiroot} of the graph.

In the Pacotte network of Fig. 4, vertices 7, 9, 11, 13, 15, 17, 19 and 28 thus form the sole antibase of the graph.

One also proves the following theorems: \\

\begin {thm} [\cite{Ber}]
If $ G $ is a circuit-free graph with at least one arc, by any arc passes a cocircuit. \\
\end {thm}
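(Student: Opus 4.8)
The plan is to produce, for an arbitrary arc $u = (s,t)$, an explicit vertex set $A \subset X$ whose cocycle $\omega(A)$ is actually a cocircuit through $u$. First I would pin down the reading of \emph{cocircuit} that the cocycle definition above leaves implicit: just as a circuit is a cycle whose arcs are all coherently oriented, a cocircuit is a cocycle $\omega(A) = \omega^+(A) \cup \omega^-(A)$ all of whose arcs cross the cut in the same direction, i.e. one for which $\omega^-(A) = \emptyset$ (or symmetrically $\omega^+(A) = \emptyset$). With this in hand, given the arc $u = (s,t)$, I would let $B$ be the set of all vertices reachable from $t$ by a directed path, with $t \in B$ via the trivial path, and set $A = X - B = \bar{A}$'s complement, i.e. $A = X - B$ so that $\bar{A} = B$.

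Then I would verify three points. (i) $t \in \bar{A}$, which is immediate since $t \in B$. (ii) $s \in A$, treated as the crux below. (iii) $\omega^-(A) = \emptyset$, which follows from the closure property of $B$ under successors: if $x \in B$ and $(x,y) \in U$, then $y$ is reachable from $t$ through $x$, so $y \in B$; hence no arc can have its terminal end in $A$ and its initial end in $\bar{A} = B$, which is exactly $\omega^-(A) = \emptyset$. Consequently $\omega(A) = \omega^+(A)$ is a cocircuit, and once (ii) is established the arc $u$, having initial end $s \in A$ and terminal end $t \in \bar{A}$, lies in $\omega^+(A)$, so $u$ belongs to this cocircuit.

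The single place where the hypothesis is genuinely used — and the step I would flag as the main obstacle — is point (ii), namely $s \notin B$. If $s$ were reachable from $t$, concatenating a path $t \to \cdots \to s$ with the arc $u = (s,t)$ would close up into a directed cycle, hence a circuit, contradicting circuit-freeness; so this is precisely where acyclicity is indispensable, and no purely cut-theoretic argument ignoring orientation can succeed (in a graph with circuits an arc may lie on no cocircuit at all). I would also remark that $A$ and $\bar{A}$ are both non-empty, since they contain $s$ and $t$ respectively, so $\omega(A)$ is a legitimate cocycle rather than a degenerate empty cut. Finally, I would note that the symmetric construction, taking $A$ to be the set of ancestors of $s$ (closed under predecessors) instead, yields the same conclusion with the roles of $\omega^+$ and $\omega^-$ exchanged, showing that the choice of side is merely conventional.
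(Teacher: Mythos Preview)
The paper does not prove this theorem; it simply cites it from Berge's \textit{Graphes et Hypergraphes} and moves on, so there is no in-paper argument to compare against. Your proof is correct and is the standard one for this result. You rightly supply the definition of \emph{cocircuit} that the paper leaves implicit (a cocycle $\omega(A)$ with $\omega^-(A)=\emptyset$ or $\omega^+(A)=\emptyset$, the oriented dual of a circuit), and your construction --- taking $B$ to be the set of vertices reachable from $t$ and $A = X \setminus B$ --- works exactly as you describe: closure of $B$ under successors forces $\omega^-(A)=\emptyset$, and circuit-freeness is precisely what blocks $s$ from lying in $B$, which you correctly identify as the one place where the hypothesis is used. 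The remark on the symmetric ancestor-set construction is also accurate. The only cosmetic slip is the garbled phrase ``$A = X - B = \bar{A}$'s complement''; the intended $A = X \setminus B$, $\bar A = B$ is clear from what follows.
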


\begin {thm} [\cite{Ber}]
A graph is strongly connected if and only if by any arc passes a circuit.
\end {thm}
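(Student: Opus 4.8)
The plan is to prove the two implications separately, working from the definition of strong connectivity (for every ordered pair $x, y$ of vertices there is a directed path from $x$ to $y$) and exploiting the fact that ``lying on a circuit'' is the same as ``being embedded in a two-way reachability relation.''

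First I would dispatch the direct implication, which is the easy one. Assume $G$ is strongly connected and let $u = (x, y) \in U$ be an arbitrary arc. By strong connectivity there is an elementary directed path $c$ from $y$ back to $x$; since $c$ starts at $y$ and ends at $x$, it cannot reuse the arc $u = (x,y)$. Concatenating $u$ with $c$ then yields a closed directed sequence in which the terminal end of each arc is the initial end of the next, that is, a circuit in the sense of the definition given above, and it passes through $u$. Hence through any arc there passes a circuit.

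Then comes the converse, where the real content lies. Assume that through every arc there passes a circuit, and introduce on $X$ the relation $x \sim y$ defined by the existence of directed paths both from $x$ to $y$ and from $y$ to $x$. This is an equivalence relation whose classes are exactly the strongly connected components. The hypothesis supplies the crucial local step: if $(a, b) \in U$, then the direction $a \to b$ is witnessed by the arc itself while $b \to a$ is witnessed by the complementary portion of a circuit through $(a,b)$, so $a \sim b$. Thus any two vertices joined by an arc fall in the same class. Propagating this along a chain $x = v_{0}, v_{1}, \ldots, v_{k} = y$ of the underlying graph, in which each consecutive pair $\{v_{i}, v_{i+1}\}$ carries an arc in one direction or the other, and using transitivity of $\sim$, one obtains $x \sim y$ for every pair, i.e. $G$ is strongly connected.

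The main obstacle is precisely this propagation step in the converse, which tacitly requires the underlying (undirected) graph to be \emph{connected}: without that assumption the statement fails, since the circuit condition can be satisfied independently inside each component while distinct components remain mutually unreachable. I would therefore make the connectedness hypothesis explicit — it is in any case implicit in the setting of this paper, where the Pacotte networks under consideration are connected oriented $1$-graphs — and take care that, when concatenating the paths furnished by the local step $a \sim b$ along the chain, the result is a genuine directed path in each direction, which is exactly what transitivity of mutual reachability (rather than mere adjacency) guarantees.
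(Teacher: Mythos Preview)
Your argument is correct and is the standard proof of this classical result. Note, however, that the paper itself does not supply a proof: the theorem is simply quoted from Berge and immediately applied (to deduce that a circuit-free Pacotte network cannot be strongly connected), so there is no in-paper argument to compare yours against.

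Your observation that the converse direction tacitly requires the underlying undirected graph to be connected is well taken; without it the statement is indeed false. In the paper this causes no trouble, since connectedness is a standing hypothesis on Pacotte networks (see \S5.1), but it is worth making explicit as you do. One small expository point: when you concatenate $u$ with the elementary path $c$ in the forward direction, you might state explicitly that the resulting closed walk has no repeated arcs (which follows since $c$ is elementary and $u\notin c$), so that it is a circuit in the paper's sense (a cycle, hence arc-simple, with consistent orientation).
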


We deduce that a Pacotte network without circuits cannot be strongly connected. However, we will prove later that it is, in this case, {\it quasi-strongly connected}.

\subsection {The notion of {\it concordance}}

Pacotte names {\it concordant} the vertices having a common layer (see \cite{Pac8}, 13-14) and which will necessarily have the same sign.

Let us call $ \mathcal {R} $ the concordance relation, defined on $ X $, the set of vertices or poles of $ R = G (X, U) $. According to Pacotte, it has the following properties:

(1) $ x \mathcal {R} y \ \textnormal {and} \ x \mathcal {R} z \Rightarrow $ dom($x) \equiv $ dom$ (y) $ beyond the concordance layer (xy);

(2) $ x \mathcal {R} y \ \textnormal {and} \ x \mathcal {R} z \Rightarrow $ dom$(x) \equiv $ dom$ (z) $ beyond the concordance layer (xz).

We deduce that dom($ y) \equiv $ dom($ z $), therefore that $ y \mathcal {R} z $ (transitivity of $ \mathcal {R} $).

Otherwise:

- If the layer ($ xz) \in $ dom($ x $), is beyond the layer $ (xy)  \in $ dom($ x $), then, $ y \mathcal {R} z $ according to ($ xZ$).

- Conversely, if $ y \mathcal {R} z $ according to ($ xz $), then the layer ($ xz) \in $ dom($ x $) is beyond the layer $ (xy) \in $ dom ($ x $).

In the intermediate case, the $ (xy) \cap (xz) $ concordance layer will be made of a part of $ (xy) $ and a part of $ (xz) $.

Note that the relation $ \mathcal {R} $ is not irreflexive, because otherwise, as it is transitive, there would be a {\it total order} on the set of poles of the network, which is not necessarily the case. Furthermore, the notion of concordance having little meaning for a single pole, Pacotte does not signal that it can be reflexive. However, if we admit that a pole is concordant with itself, which gives $ \mathcal {R} $ reflexivity, then the set of poles of the network becomes a {\it preorder}. On the other hand, $ \mathcal {R} $ is not antisymmetric in the weak sense:
\[
\forall x, y \in X, x \mathcal {R} y \ \textnormal{and} \ y \mathcal {R} x \nRightarrow x = y,
\]
The fact that two different poles $ x $ and $ y $ have a common layer in no way means that $ x = y $. A fortiori, $ \mathcal {R} $ is not antisymmetric in the strong sense, so that $ \mathcal {R} $ is not an {\it order} relation\footnote {That does not prevent Pacotte's networks to be antisymmetric graphs since, between two poles, the existence of an arrow oriented in one direction excludes the existence of an arrow oriented in the opposite direction (see below, § 5).}.

Pacotte also introduces the following elements: \\

\begin {dfn}
It will be said that two poles are in {\it perspectivity} when they are concordant and when their domains are tree-structured in the region which precedes the layer of concordance – which then becomes a {\it layer of perpectivity}. \\
\end {dfn}

\begin {prop} [Pacotte]
The relation of perspectivity is transitive.
\end {prop}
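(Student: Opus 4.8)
The plan is to reduce the transitivity of perspectivity to the already-established transitivity of the concordance relation $\mathcal{R}$, and then to control the extra tree-structure requirement by a case analysis on the relative position of the concordance layers. Suppose $x$ is in perspectivity with $y$ and $y$ in perspectivity with $z$. By definition each of these pairs is in particular concordant, so $x \mathcal{R} y$ and $y \mathcal{R} z$; since $\mathcal{R}$ is transitive, $x \mathcal{R} z$, which already secures the concordance half of perspectivity. It then remains only to prove that the domains of $x$ and $z$ are tree-structured in the region preceding the concordance layer $(xz)$, this layer thereby becoming a layer of perspectivity.

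First I would fix the three regions involved: the tree-structured region $T_{xy}$ preceding $(xy)$ and the tree-structured region $T_{yz}$ preceding $(yz)$ (both given by the two perspectivity hypotheses), together with the region $T_{xz}$ preceding $(xz)$, whose acyclicity is the goal. Using the properties of $\mathcal{R}$ recalled above --- namely that $\mathrm{dom}(x) \equiv \mathrm{dom}(y)$ beyond $(xy)$ and $\mathrm{dom}(y) \equiv \mathrm{dom}(z)$ beyond $(yz)$ --- I would locate $(xz)$ relative to $(xy)$ and $(yz)$ according to the trichotomy already described in the excerpt: either $(xz)$ lies beyond $(xy)$ in $\mathrm{dom}(x)$, or $(xy)$ lies beyond $(xz)$, or one falls into the intermediate case where the concordance layer is assembled from a part of $(xy)$ and a part of $(xz)$.

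In the first case, $T_{xz}$ decomposes as $T_{xy}$ followed by the portion of $\mathrm{dom}(x) \equiv \mathrm{dom}(y)$ running from $(xy)$ to $(xz)$; this trailing portion sits inside $T_{yz}$ under the identification of domains beyond $(xy)$, hence is itself tree-structured, and a tree prolonged along a single front by a further tree remains acyclic, so $T_{xz}$ is a tree. The symmetric case, with $(xy)$ beyond $(xz)$, is handled identically by interchanging the roles. The delicate case is the intermediate one: here I would argue that $T_{xz}$ is covered by the union of a sub-region of $T_{xy}$ and a sub-region of $T_{yz}$ glued along their common concordance front, and invoke the acyclicity criterion established above (a connected component on $n$ vertices contains a cycle as soon as it has $n$ edges or more) to forbid any cycle spanning the junction.

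The main obstacle I anticipate is precisely this gluing step. One must verify that combining two acyclic regions along a shared front does not create a new cycle straddling the two pieces. Since every cycle in a Pacotte network is of even length and the graph is bipartite, such a spanning cycle would have to re-enter both tree-structured regions and would therefore force a repeated vertex on the common layer; the bipartite and antisymmetric structure of the network (no two oppositely oriented arcs between the same poles) then contradicts the tree-structure of $T_{xy}$ or $T_{yz}$. Making this contradiction fully rigorous --- in particular fixing what "the region preceding a layer" means as an honest subgraph, and establishing that the common layer is genuinely shared by both regions in the intermediate case --- is where the real work lies.
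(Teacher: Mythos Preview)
Your core move --- reducing the transitivity of perspectivity to the already-established transitivity of concordance --- is exactly what the paper does. The paper's own proof, however, is a single sentence: it simply asserts that ``the transitivity of the concordance relationship generates the transitivity of the perspectivity relation'' and stops there, without any case analysis on the position of $(xz)$ relative to $(xy)$ and $(yz)$, and without addressing the tree-structure condition at all.

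In other words, your proposal is considerably more careful than the original. The trichotomy you set up and the gluing argument you sketch are genuine additions; the paper does not attempt them. The difficulty you flag in the intermediate case --- making rigorous that two acyclic regions glued along a shared front remain acyclic, and pinning down ``the region preceding a layer'' as an honest subgraph --- is precisely the gap that the paper leaves unaddressed. So your approach is the same in spirit but strictly more detailed; what you identify as ``where the real work lies'' is work that Pacotte (and the paper) simply do not do.
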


\begin {proof}
If $ x $ is in perspectivity with $ y $ and $ y $ with $ z $, then $ x $ is in perspectivity with $ z $. The transitivity of the concordance relationship generates the transitivity of the perspectivity relation. For the reflexivity and the antisymmetry, we can make the same remarks as above. \\
\end {proof}
 
 \begin {dfn} 
  We call {\it transitive graph} the 1-graph $ H (X, U) $ verifying:
 \[
 (x, y) \in U, (y, z) \in U \Rightarrow (x, z) \in U. \\
 \]
 \end {dfn}
 
\begin {prop}
 In a Pacotte network, the graphs $ H (C, W) $ and $ H (P, Z) $ associated respectively with the concordances and the perspectives are transitive graphs. \\
 \end {prop}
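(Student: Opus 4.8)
The plan is to reduce the statement to the transitivity of the two underlying binary relations, both of which have already been established in the text. Recall that the notion of \emph{transitive graph} $H(X,U)$ introduced immediately above requires precisely that its arc set be closed under composition, i.e.
\[
(x,y)\in U,\ (y,z)\in U \Rightarrow (x,z)\in U.
\]
Now $H(C,W)$ and $H(P,Z)$ are, by construction, the representing graphs of the concordance relation $\mathcal{R}$ and of the perspectivity relation: one places an arc $(x,y)$ in $W$ exactly when $x\mathcal{R}y$, and an arc $(x,y)$ in $Z$ exactly when $x$ is in perspectivity with $y$. Hence the defining implication of a transitive graph is, for these two graphs, nothing other than the transitivity of the corresponding relation.

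First I would treat $H(C,W)$. Taking $(x,y)\in W$ and $(y,z)\in W$, the construction of the representing graph gives $x\mathcal{R}y$ and $y\mathcal{R}z$. But the concordance relation was shown to be transitive in the discussion preceding this proposition (from $x\mathcal{R}y$ and $y\mathcal{R}z$ one deduces the equivalence of the domains beyond the concordance layers, whence $x\mathcal{R}z$). Therefore $(x,z)\in W$, which is exactly the closure condition, so $H(C,W)$ is a transitive graph. For $H(P,Z)$ I would argue in the same way: if $(x,y),(y,z)\in Z$, then $x$ is in perspectivity with $y$ and $y$ with $z$, and by the earlier Proposition of Pacotte asserting that perspectivity is transitive we obtain that $x$ is in perspectivity with $z$, i.e. $(x,z)\in Z$.

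Since both relations have already been proved transitive, no genuinely new work is required: the proposition is simply the translation of those two facts into the graph-theoretic vocabulary. The only point demanding care — and it is pure bookkeeping — is to fix the orientation convention on the arcs of $W$ and $Z$ so that $(x,y)\in W \Leftrightarrow x\mathcal{R}y$ (and likewise for $Z$) consistently, rather than mixing it with the reversed reading $y\mathcal{R}x$. Once this convention is pinned down, the transitivity of each relation transfers verbatim to the closure property of the associated graph, and there is no further obstacle.
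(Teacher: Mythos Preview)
Your proposal is correct and matches the paper's approach: the paper states this proposition without an explicit proof, treating it as the immediate translation into graph-theoretic language of the two transitivity facts established just before (transitivity of $\mathcal{R}$ and of perspectivity) together with Definition~4.11. The only minor discrepancy is that the paper's derivation of the transitivity of $\mathcal{R}$ is literally in the Euclidean form $x\mathcal{R}y$ and $x\mathcal{R}z \Rightarrow y\mathcal{R}z$, not the form you quote; but since concordance is manifestly symmetric this is equivalent, and your argument goes through unchanged.
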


\begin {dfn}
We call {\it double layer} of a domain the system of connecting dipoles leading to a layer.
\end {dfn}

\subsection {Generalization}

Suppose we start from a positive pole. In this case, it is the direct current which generates the domain and the layer itself is called {\it positive}. The corresponding double layer is crossed from the inside to the outside. Arrived from the pole of the domain to the layer, instead of continuing, let's pass from the direct current to the reverse current. We cross backwards the dipoles of the double layer then generate the domains of all its negative poles. Hence the idea of {\it cusp domain}. \\

\begin {dfn}
We call {\it cusp domain} the domain traveled starting from the chosen pole (except the initial dipole) and turning back on a layer of the same sign. One thus generates a domain by layers of opposite sign and one ends finally in poles also opposite.
\end {dfn}

We can sometimes exclude, after turning back, at the end of the double layer, the return to the finish line. If only one channel remains, the cusp is then called a {\it reflection}.

According to the author, the preceding notions are enough to reconstitute the whole logico-mathematical domain and, in fact, the mathematician successively defines, in chapters 3 to 9 of his work, a "tree-like conception" ({\it conception  "rameuse" in French}) of the sum, of the ordered product, of combinations, of multidimensional spaces, of the logical product, of the network of predicates and of the network of the logic of propositions.

We are therefore faced with an extremely powerful approach which we cannot get rid of. Concerning in particular the ideas of {\it dominating a pole}, {\it layer of a domain} and {\it poles whose domains have a common layer}, Pacotte observes "that they must not be grasped as concepts implementing the abstract formal schemas of totality, class, definition by predicates or by conditions” (see \cite{Pac8}, 15). According to him, they in fact give the "right point of view" which then makes it possible to generate all formal logic and the most abstract mathematics.
                                                                                                                                                                                                                                                                                                                                                                                                                                                                                                                                                                                                                                                                                                                                                                                                                                                                                                                                                                                                                                                                                                                                                                                                                                                                                                                                                                                                                                                                                                                                                                                                                                                                                                                                                                                                                                                                                                                                                                                                                                                                                                                                                                                                                                                                                                                                                                                                                                                                                                                                                                                    
\section{Mathematical properties of Pacotte's networks}

As we have seen, a Pacotte tree network is a bi-chromatic graph. If we assume $ n $ vertices (this number depends on the network considered), it is a graph of order $ n $. Furthermore, the number of arcs going from a vertex $ x_{i} $ to a vertex $ x_{j} $ never exceeding 1, it is a 1-graph. This 1-graph is, moreover, {\it antisymmetric} because $ (x, y) \in U \Rightarrow (y, x) \notin U $.

\subsection {Connectivity}
A Pacotte network is also {\it connected}. Connectivity could be discussed since, in the representation of Pacotte (see Fig. 4), the dipole-units do not reach their ends. 

But it is in fact an artifice intended to distinguish them from the connecting dipoles\footnote{Pacotte notes that + or - signs placed on the poles would restore connectivity. If he does not use this solution, it is only because, graphically, these signs would have trouble to be well centered (see \cite{Pac8}, 8-9).}.

 In reality, the network is connected, which means that, for any pair $ (x, y) $ of distinct vertices, there exists a chain $ \mu [x, y] $ connecting these two points. In the case of a Pacotte network without circuits, the network is even in general {\it quasi-strongly connected} because one can verify, in particular when it is a tree or a rooted tree in the sense of graph theory, that, for any pair of vertices $ (x, y) $, there exists a vertex $ z (x, y) $ from which start both a path going to $ x $ and a path going to $ y $.

\subsection {Stability}

Now remember the following definition: \\

\begin{dfn}
A graph $ G $ is said to be {\it bipartite} if we can partition all of its vertices into two classes, so that two adjacent vertices belong to different classes.
\end{dfn}

A Pacotte network is obviously a bipartite graph. We will see that it is, moreover, composed of two stable sets. \\

\begin{dfn}
Let $ G (X, U) $ be a simple graph (i.e. an antisymmetric 1-graph). We say that a set $ S \subset X $ is {\it stable} if two distinct vertices of $ S $ are never adjacent.
\end {dfn}

A Pacotte network, i.e a bichromatic and bipartite graph (that is to say, having no cycle of odd length), necessarily admits a partition $ (A, B) $ of the set of its vertices into two stable sets $ A $ and $ B $. 

\subsection {Kernel. Grundy function}

\begin {dfn}
Given a 1-graph $ G = (X, \Gamma) $ we say that a set $ A \subset X $ is {\it absorbing} if, for all vertex $ x \notin A $, we have:
\[
\Gamma (x) \cap A \neq 0. \\
\]
\end{dfn}

\begin{dfn}
Given a 1-graph $ G = (X, \Gamma) $ we say that a set $ S \subset X $ is a {\it kernel} if $ S $ is both stable and absorbing. So we have:
\[
(1) \qquad \quad x \in S \Rightarrow \Gamma (x) \cap S = \emptyset \qquad \textnormal {(stable)};
\]
\[
(2) \ \quad x \notin S \Rightarrow \Gamma (x) \cap S \neq \emptyset \qquad \textnormal {(absorbing)}.
\]
\end {dfn}

When a Pacotte network is a circuit-free graph, we can therefore apply the following theorem. \\

\begin {thm} [\cite{Ber}]
If $ G = (X, \Gamma) $ is a circuit-free 1-graph, it admits a kernel; moreover, this kernel is unique.
\end {thm}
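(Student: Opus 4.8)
The plan is to construct the kernel explicitly by means of a rank function oriented toward the sinks of the graph, and then to establish uniqueness by the very same inductive device. This also explains the heading of the present subsection, since the kernel produced will be exactly the set of vertices on which the Grundy function vanishes.

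First I would introduce, for each vertex $x$, the integer $\rho(x)$ equal to the length of a longest path of $G$ issuing from $x$ (this is a \emph{backward} rank, distinct from the root-based rank $r$ defined earlier). Because $G$ is circuit-free and finite, such a longest path exists and is bounded, so $\rho$ is well defined; moreover every successor $y \in \Gamma(x)$ satisfies $\rho(y) < \rho(x)$, and the vertices with $\rho(x) = 0$ are precisely those with $\Gamma(x) = \emptyset$, that is, the vertices of the antibase, whose non-emptiness is guaranteed in any circuit-free graph (the dual of the theorem of \cite{Gon}, applied to the reversed graph). This rank permits arguments by induction in which the successors of a vertex are always settled before the vertex itself.

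For existence, I would define $S$ by the local rule $x \in S \iff \Gamma(x) \cap S = \emptyset$, applied by increasing values of $\rho$: a sink is placed in $S$, its out-neighbourhood being empty, and a vertex of positive rank is placed in $S$ exactly when none of its successors — already classified, since they have strictly smaller rank — lies in $S$. The resulting set $S$ is stable, because $x \in S$ forces $\Gamma(x) \cap S = \emptyset$ by construction, and absorbing, because $x \notin S$ means, again by construction, that some successor of $x$ belongs to $S$; hence $S$ satisfies both conditions $(1)$ and $(2)$ and is a kernel.

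The delicate point, and the precise reason why circuit-freeness is indispensable, is uniqueness. Here I would first note that any kernel $T$ obeys the same local rule: stability gives $x \in T \Rightarrow \Gamma(x) \cap T = \emptyset$ while absorption gives $x \notin T \Rightarrow \Gamma(x) \cap T \neq \emptyset$, so that $x \in T \iff \Gamma(x) \cap T = \emptyset$ for every $x$. I would then compare two kernels $S$ and $T$ by induction on $\rho$: the sinks belong to both (an empty out-neighbourhood cannot be absorbed otherwise), and if $S$ and $T$ agree on all vertices of rank $<\rho(x)$, hence on $\Gamma(x)$, then $\Gamma(x) \cap S = \Gamma(x) \cap T$, whence $x \in S \iff x \in T$ by the common local rule; the induction forces $S = T$. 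The main obstacle is exactly this step: absent circuit-freeness the rank $\rho$ need not exist, the successors of a vertex cannot be resolved first, and the local rule no longer propagates to a global coincidence — which is why arbitrary directed graphs may admit several kernels, or none at all.
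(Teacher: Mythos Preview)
The paper does not supply a proof of this theorem: it is quoted from \cite{Ber} as a known result and immediately applied to Pacotte networks, so there is nothing in the text to compare your argument against line by line.

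That said, your proof is correct and is essentially the classical argument one finds in Berge. The backward rank $\rho$ (longest outgoing path) gives a well-founded grading because the graph is finite and circuit-free; the greedy rule $x\in S\iff \Gamma(x)\cap S=\emptyset$, processed by increasing $\rho$, is exactly the construction of the level set $\{g=0\}$ of the Grundy function, which fits the organisation of the subsection and anticipates the next theorem. Your uniqueness step is the standard one: the two defining clauses of a kernel combine into the single equivalence $x\in T\iff \Gamma(x)\cap T=\emptyset$, and since membership of $x$ is then determined by the status of $\Gamma(x)$, induction on $\rho$ forces any kernel to coincide with $S$. The only tacit hypothesis you have made explicit is finiteness of $X$; this is the setting intended in \cite{Ber} (more generally, ``progressively finite'' suffices), and it is worth flagging since the statement as quoted does not mention it.
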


In this case, a Pacotte network has a unique kernel. We can also show that it accepts a Grundy function. Recall the definition: \\

\begin {dfn}
Let $ G = (X, \Gamma) $ be a circuit-free 1-graph. By definition, a function $ g (x) $ which associates with any integer $ x \in X $ an integer $ \ge 0 $, is a Grundy function  if $ g(x) $ is the smallest integer which does not appear in the set $ \{g (y)\ |\ y \in \Gamma (x) \} $.
\end {dfn}

It then follows from the previous theorem that: \\

\begin {thm} [Grundy 1939]
A graph without circuits admits a unique Grundy function $ g (x) $; at any point $ x $, this function $ g (x) $ is less than or equal to the length of the longest path from $ x $.
\end {thm}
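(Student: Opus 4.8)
The plan is to establish existence–uniqueness and the numerical bound by inductions organized along the circuit-freeness of $G$. The key auxiliary quantity is $\lambda(x)$, the length of the longest path issuing from $x$; because $G$ has no circuits (and is finite), no path can revisit a vertex, so $\lambda(x)$ is a well-defined non-negative integer for every $x$, with $\lambda(x)=0$ exactly when $\Gamma(x)=\emptyset$. Moreover, for any successor $y\in\Gamma(x)$ one has $\lambda(y)\le\lambda(x)-1$, since prefixing the arc $(x,y)$ to a longest path from $y$ yields a path from $x$. This strict decrease of $\lambda$ along arcs is precisely what makes induction on $\lambda(x)$ legitimate, and it is the place where the circuit-free hypothesis does all the work.

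For existence and uniqueness I would argue by induction on $\lambda(x)$. The base case $\lambda(x)=0$ forces $\Gamma(x)=\emptyset$, so the set $\{g(y)\mid y\in\Gamma(x)\}$ is empty and its smallest absent non-negative integer is $0$; hence $g(x)=0$ is both admissible and the only possibility. For the inductive step, fix $x$ with $\lambda(x)=k>0$ and assume $g$ has been uniquely determined on every vertex of smaller $\lambda$-value. Since every $y\in\Gamma(x)$ satisfies $\lambda(y)<k$, all values $g(y)$ are already fixed, and the defining rule leaves no freedom: $g(x)$ must equal the smallest non-negative integer absent from the finite set $\{g(y)\mid y\in\Gamma(x)\}$, a value that exists and is unique. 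Collecting the two cases yields a function $g$ defined on all of $X$, and uniqueness follows because at each step the value was forced.

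For the inequality $g(x)\le\lambda(x)$ I would run a second induction on $\lambda(x)$, reusing the base case $g(x)=0=\lambda(x)$ when $\Gamma(x)=\emptyset$. In the inductive step the hypothesis gives $g(y)\le\lambda(y)\le\lambda(x)-1$ for every $y\in\Gamma(x)$, so the set $\{g(y)\mid y\in\Gamma(x)\}$ is contained in $\{0,1,\dots,\lambda(x)-1\}$. The integer $\lambda(x)$ therefore does not belong to this set, whence the smallest missing non-negative integer is at most $\lambda(x)$; that is, $g(x)\le\lambda(x)$, as claimed.

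I would expect the only genuine difficulty to be conceptual rather than computational: one must be certain that the recursion underlying the Grundy rule terminates and is well-founded, and this rests entirely on the absence of circuits, which guarantees both the existence of sinks (vertices with $\Gamma(x)=\emptyset$, anchoring the induction) and the finiteness of each $\lambda(x)$. A graph carrying a circuit can fail to admit any Grundy function, or admit several, so no step may quietly dispense with the hypothesis. As a remark, the level set $\{x\mid g(x)=0\}$ is exactly the unique kernel furnished by the earlier theorem of Berge, which supplies an independent consistency check though it is not needed for the argument above.
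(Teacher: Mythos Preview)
Your argument is correct and is in fact the standard proof of this classical result: well-founded induction on $\lambda(x)$, the length of a longest path issuing from $x$, first to force existence and uniqueness of $g(x)$, then to obtain the bound $g(x)\le\lambda(x)$. Each step is sound, and you are right to stress that circuit-freeness is exactly what makes $\lambda$ finite and strictly decreasing along arcs, so that the recursion underlying the Grundy rule is well-founded. The implicit finiteness assumption on $G$ is the only point one might flag, but it is the ambient convention throughout the paper.

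As for the comparison you were asked to make: the paper does not prove this theorem at all. It is quoted as a known result, attributed to Grundy (1939), and immediately applied to conclude that a circuit-free Pacotte network admits a unique Grundy function. So there is no ``paper's own proof'' to set beside yours; you have supplied precisely the classical argument that the paper takes for granted. Your closing remark linking $\{x\mid g(x)=0\}$ to the unique kernel is also apt and matches the way the paper juxtaposes the kernel theorem with the Grundy theorem.
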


A Pacotte network without circuits therefore also admits a Grundy function and only one.

Let us now suppose a Pacotte network admitting circuits (as in the case of Fig. 5 (left)). As these circuits are of even length (Theorem 4.3), the following theorem applies. \\

\begin {thm} [Richardson, 1953)]
If $ G = (X, U) $) is a 1-graph without odd length circuits, it admits a kernel (not necessarily unique).
\end {thm}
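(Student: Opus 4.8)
The plan is to prove Richardson's theorem by reducing to the strongly connected case and then gluing kernels together by induction on $|X|$. The engine of the whole argument is a lemma on strongly connected digraphs, and it is precisely there that the hypothesis on odd circuits is used.

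First I would treat the case where $G$ is strongly connected. Fix a vertex $v$. For every vertex $u$, strong connectivity provides a path from $v$ to $u$, and I claim that all such paths have the same length parity. Indeed, if $P_{1}, P_{2}$ are two paths from $v$ to $u$ and $Q$ is a path from $u$ to $v$, then $P_{1}+Q$ and $P_{2}+Q$ are closed walks, each of which decomposes into directed circuits; since every circuit of $G$ has even length by hypothesis, both closed walks are even, forcing $|P_{1}| \equiv |P_{2}| \pmod 2$. Hence a parity label $p(u) \in \{0,1\}$ is well defined, and every arc $(x,y)$ satisfies $p(y) \equiv p(x)+1 \pmod 2$ (this also rules out loops, which would be odd circuits). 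Setting $S = \{u : p(u)=0\}$, stability is immediate, since an arc leaving $S$ lands in the parity-$1$ class, and absorbency follows because any $x \notin S$ has an outgoing arc, $G$ being strongly connected, which necessarily reaches $S$. A single-vertex component, which carries no loop, is trivially a kernel.

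Next I would handle the general graph by induction on $|X|$, using the condensation into strongly connected components, which is circuit-free and hence has a terminal component. Let $C$ be such a component (a sink in the condensation, so no arc leaves $C$), and let $S_{C}$ be its kernel from the lemma. Put $A = \Gamma^{-1}(S_{C})$, the set of predecessors of $S_{C}$; then $S_{C} \cap A = \emptyset$ by stability, while $C \setminus S_{C} \subseteq A$ by absorbency inside $C$, so $S_{C} \cup A \supseteq C$ is nonempty. Form the induced subgraph $G'$ on $X' = X \setminus (S_{C} \cup A)$; it has no odd circuit, so by the induction hypothesis it has a kernel $S'$. I would then check that $S = S_{C} \cup S'$ is a kernel of $G$: stability holds because $C$ is terminal (no arc leaves $S_{C}$) and because $S'$ avoids every predecessor of $S_{C}$; absorbency holds because each $x \notin S$ lies either in $A$, whence it points into $S_{C}$, or in $X' \setminus S'$, whence it points into $S'$ by the inductive kernel. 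Since $X = S_{C} \sqcup A \sqcup X'$, every vertex is covered.

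The main obstacle is the strongly connected lemma, specifically the well-definedness of the parity labelling. Everything hinges on the fact that a closed directed walk decomposes into circuits so that its length parity equals the sum of the circuit-length parities; combined with the hypothesis that all circuits are even, this yields the consistent $2$-colouring whose parity-$0$ class is a kernel. The inductive gluing step is then essentially bookkeeping, the only points requiring care being that $A$ may contain vertices lying outside $C$, and that removing $S_{C} \cup A$ strictly shrinks the vertex set, since $C \neq \emptyset$, which keeps the induction well founded.
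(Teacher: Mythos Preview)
The paper does not actually prove this theorem: it is quoted as a classical result attributed to Richardson (1953), stated without proof, and immediately applied to conclude that a Pacotte network with circuits admits a kernel. So there is no ``paper's own proof'' to compare against; you have supplied what the paper merely cites.

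Your argument is correct and is essentially the standard modern proof of Richardson's theorem. The parity labelling on a strongly connected component is well defined precisely because any two directed walks from $v$ to $u$ differ by a closed walk, closed walks decompose into circuits, and the hypothesis makes every circuit even; the class $\{u:p(u)=0\}$ is then simultaneously stable (arcs flip parity) and absorbing (every vertex of a nontrivial strongly connected component has positive out-degree). The inductive step via the condensation is also sound: choosing a terminal component $C$, taking its kernel $S_C$, deleting $S_C$ together with $A=\Gamma^{-1}(S_C)$, and adjoining the inductively obtained kernel $S'$ of the remainder gives a global kernel, the two nontrivial checks being exactly the ones you flag --- no arc leaves $S_C$ because $C$ is terminal, and no arc runs from $S'$ into $S_C$ because $S'\subseteq X\setminus A$. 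One small point worth stating explicitly is that the parity is well defined for \emph{walks}, not just paths, so that extending a walk from $v$ to $x$ by an arc $(x,y)$ legitimately computes $p(y)$ even if the result is not a simple path; your closed-walk decomposition argument already covers this, but it is the one place a careless reader might stumble.
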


A Pacotte network which admits circuits therefore also admits at least one kernel. Consequently, as a graph without odd length circuits admits a Grundy function (see \cite{Ber}), it also admits a Grundy function.

\subsection {Couplings. Alternate chain}

It will now be recalled that the arcs or edges of the Pacotte network are, like the poles, of two kinds: rays of sheaf (unit dipoles) and opposite segments (connecting dipoles). We introduce here the following definition: \\

\begin {dfn}
Given a simple graph $ G (X, U) $, we call {\it coupling} a set $ U_ {0} $ of edges such that any two of the edges of $ U_{0} $ are non-adjacent.
\end {dfn}

For a given coupling $ U_{0} $, we generally draw the edges of $ U_ {0} $ by {\it thick} lines, the edges of $ V_ {0} = U - U_{0} $ by {\it fine} lines. \\

\begin {dfn}
We say that a vertex $ x $ is {\it saturated} by a coupling $ U_{0} $ if there exists an edge of $ U_ {0} $ attached to $ x $. In this case, we write $ S \in U_ {0} $. If $ S \notin U_ {0} $, we say that the vertex $ x $ is {\it unsaturated} by $ U_ {0} $.
\end {dfn}

A coupling which saturates all the vertices is called a {\it perfect coupling}. Perfect coupling is obviously maximum. \\

\begin {dfn}
We call {\it alternate chain} a simple chain (i.e. not using the same edge twice) whose edges are alternately in $ U_ {0} $ and in $ V_ {0} $ (i.e. alternately {\it thick} and {\it thin}). \\
\end {dfn}

\begin {thm} [Berge, 1957]
A coupling $ U_{0} $ is maximum if and only if there is no alternating chain connecting an unsaturated point to another unsaturated point.
\end {thm}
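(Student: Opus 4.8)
The plan is to prove both implications through the symmetric difference of couplings, which is the standard device behind augmenting-chain arguments.

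First I would dispatch the easy direction (necessity of the condition). Suppose there is an alternate chain $\mu$ joining two unsaturated vertices. Since both endpoints are unsaturated, the first and last edges of $\mu$ must lie in $V_0 = U - U_0$; for if an end-edge belonged to $U_0$ its endpoint would already be saturated. As the edges alternate between $V_0$ and $U_0$, the chain therefore contains exactly one more thin edge than thick edges. Forming $U_0' = U_0 \triangle \mu$ (exchanging thick and thin along $\mu$) yields a new edge set that is again a coupling: the only vertices changing their saturation status are the two endpoints, which were unsaturated, so no two edges of $U_0'$ become adjacent. Since $|U_0'| = |U_0| + 1$, the coupling $U_0$ is not maximum. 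This proves that a maximum coupling admits no such chain.

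For the converse (sufficiency) I would argue by contraposition: assuming $U_0$ is not maximum, I would exhibit an augmenting alternate chain. Let $M$ be a coupling with $|M| > |U_0|$ and let $H$ be the subgraph whose edges form the symmetric difference $U_0 \triangle M$. The key structural observation is that each vertex of $H$ is incident to at most one edge of $U_0$ and at most one edge of $M$, hence has degree at most $2$ in $H$; consequently every connected component of $H$ is an isolated vertex, a simple path, or a cycle, and along any such path or cycle the edges alternate between $U_0$ and $M$ (two consecutive edges from the same coupling would violate its matching property). The counting step then concludes: every cycle of $H$ is of even length and balances the two couplings, and every even-length path does likewise, so the strict inequality $|M| > |U_0|$ forces at least one component to be a path with strictly more $M$-edges than $U_0$-edges. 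Such a path begins and ends with an edge of $M$; its two endpoints, having no incident $U_0$-edge in $H$ and being already matched in $M$, cannot also be matched in $U_0$ (else they would meet two $M$-edges), so they are unsaturated by $U_0$. Being alternating with respect to $U_0$, this path is precisely an alternate chain joining two unsaturated points, contradicting the hypothesis and completing the proof.

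The main obstacle I expect is the structural decomposition of $H = U_0 \triangle M$: one must check carefully that the degree bound $2$ permits only paths and cycles, that these alternate, and then run the parity and counting argument cleanly so that the surplus of $M$-edges is localized in a single path rather than dissipated among the balanced cycles. The bipartiteness established earlier for Pacotte networks is convenient but not actually required here, since alternating cycles are automatically of even length; the theorem as stated holds for any simple graph.
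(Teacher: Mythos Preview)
Your argument is correct and is in fact the standard proof of Berge's theorem via the symmetric difference $U_{0}\,\triangle\,M$. Note, however, that the paper does not supply its own proof of this statement: the theorem is simply quoted and attributed to Berge (1957), and the text moves on immediately to the application to Pacotte networks. There is therefore nothing in the paper to compare your proof against.

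One small point of exposition: in the sufficiency direction, your justification that the endpoints of the surplus path are $U_{0}$-unsaturated is slightly compressed. It reads more cleanly if you argue directly that such an endpoint has degree~$1$ in $H$, the incident edge lying in $M\setminus U_{0}$; hence any $U_{0}$-edge at that vertex would either belong to $H$ (contradicting degree~$1$) or belong to $M$ as well (giving two $M$-edges at the vertex, impossible). Either way the endpoint is unsaturated by $U_{0}$, and the augmenting chain is exhibited.
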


The Pacotte tree network presents alternating chains of two types of segments, the unit dipoles and the connecting dipoles, which can be represented respectively by thick and thin lines. The first will be in $ U_{0} $, the second in $ V_{0} $. In principle, all the points are saturated, these couplings are therefore both maximums and we have, in each, $ q = n / 2 $ edges.

\section{Summary of the first part}
	
	In summary of this first part, we can say that a Pacotte tree network is therefore a connected antisymmetric 1-graph (almost strongly connected if it is circuit-free), bichromatic and bipartite. It can admit cycles or circuits of even length and is therefore not necessarily a tree or a rooted tree in the sense of graph theory. It has no unsaturated vertex and we detect two stable subsets and alternating chains of two types of segments, which lead to the existence of two maximum couplings. It has at least one kernel and a Grundy function. The notions of positive and negative "poles", posed by Pacotte, are translated by the bicolouring of the vertices. The notions of "diffluence", "fluence", "confluence" and "domain" become conditions posed on cocycles. If the Pacotte network considered is circuit-free and provided with a root, a “rank” function can be defined on the poles, which makes it possible to identify the “normal layers” of the network as the corresponding levels of the graph $ G (X, U $). On the other hand, in any case, it is not possible to establish a total order on the layers. All that can be said is that the graphs associated with the concordances and perspectives of the poles are transitive graphs (possibly preorders if we admit the reflexivity of these relations).\\ \\

	\Large
\qquad\quad Partie 2 : The reconstruction of mathematics
\normalsize
\vspace{1\baselineskip}

Pacotte's goal in {\it The tree network} is not limited to the description of the network. The network is in fact only a means, not an end. In reality, faithful to the philosophical position of the author  - that of an "integral empiricism" (see \cite{Pac6}) - Pacotte has a much more ambitious goal: it involved reconstructing all of mathematics using this single notion of "tree network" and the related concepts\footnote{It should be noted that this project is part of a series of attempts which, at the time, were more or less in the same direction. We have already mentioned that of Brouwer, but we could speak of all the other attempts to link logic or geometry to the sensitive world - for example, that of Nicod (1920) or Tarski (1927) - projects which continued until to N. Goodman and J. Vuillemin. On all that, see our comments in \cite {Par}, 116-123.}. In an article which contains, so to speak, the "philosophy" of the work that we are commenting on, Pacotte explains it in these terms:

"In a recent work [this is {\it The Tree Network}], we traced the main lines of a theory of ramifications and we proposed to consider it, not as an application of formal logic and pure mathematics, but well as the foundation of both. Formal logic would thus cease to be an activity exercising on an indeterminate object: a well-characterized formal object would necessarily condition it” (see \cite{Pac13}, p. 46).

In this article, Pacotte explains that he caught a glimpse of this essential idea, then always had it more clearly conceived, during his previous studies concerning the axiomatic, the relationship of the formal to the real and the logic of action and expression, themes respectively encountered in his previous works (see \cite{Pac5}; \cite{Pac6}; \cite{Pac7}). But what philosophically founds such a project is undoubtedly the author's Bergsonian a priori, according to which time (or, let's say, becoming), that is to say a sort of ordered flow of facts or impressions, is our fundamental experience of the world, from which everything must be reconstructed (see also, on this subject, \cite{Pac11}, 164 and 178).

The major part of his text (pages 17-54) is therefore devoted to this project. This part of Pacotte's work, never commented on, is extremely abstract and difficult to follow. In particular, it suffers from a total absence of diagrams or illustrations which would have made it easier to read. In the following, we will limit ourselves to presenting the simplest of his reconstructions (the following diagrams are ours).

\section{The tree-like conception of the sum}
 
To establish his concept of a "branching sum" (chap. 3 of \cite{Pac8}), Pacotte considers the tree part of a domain and, in this part, the layers which constitute it. He introduces the following definitions: \\

\begin {dfn} [Perspectivity on layers and trees]
Two layers of the same tree are called {\it perspective layers}. Two trees with perspective poles are called {\it perspective trees}.
\end {dfn}
\vspace {1 \baselineskip}

\begin {dfn} [projectivity]
Two perspective layers or belonging to two perspective trees are said to be {\it projective}.
\end {dfn}
\vspace {1 \baselineskip}

\begin {thm} [Pacotte]
The perspectivity of the layers is not transitive.
\end {thm}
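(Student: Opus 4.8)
The plan is to prove the statement by \emph{counterexample}, exhibiting a single Pacotte network in which three layers $L_1$, $L_2$, $L_3$ satisfy $L_1 \sim L_2$ and $L_2 \sim L_3$ yet $L_1 \not\sim L_3$, where $\sim$ denotes perspectivity of layers. Before building it, I would first make the relation explicit: by the definition just stated, two layers are perspective exactly when they are both layers of one and the same tree, so $\sim$ is a \emph{common-membership} relation, $L \sim L' \iff \exists\, T \text{ (a tree) with } L, L' \in \mathrm{layers}(T)$. Any such relation is automatically reflexive and symmetric, and it is transitive \emph{if and only if} the family of trees behaves like a partition of the set of layers (overlapping trees would have to coincide or be nested). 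The whole theorem therefore reduces to producing two \emph{distinct} trees that genuinely share a layer without either containing the other.

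Second, I would supply exactly such an overlap from the notion of \emph{concordance} already developed. As noted there, the concordance relation $\mathcal{R}$ fails to be antisymmetric: two \emph{distinct} poles $a \neq b$ of the same sign can share a common layer $L_2$. Taking them moreover in perspectivity, so that each domain is tree-structured in the region preceding $L_2$, yields two trees $T_a$ and $T_b$, rooted respectively at $a$ and $b$, both having $L_2$ as their front layer, yet distinct since they have different roots and different internal structure before $L_2$. Thus $L_2 \in \mathrm{layers}(T_a) \cap \mathrm{layers}(T_b)$ with $T_a \neq T_b$. Choosing an intermediate layer $L_1$ of $T_a$ (strictly between $a$ and $L_2$) and an intermediate layer $L_3$ of $T_b$, I immediately obtain $L_1 \sim L_2$ (both lie in $T_a$) and $L_2 \sim L_3$ (both lie in $T_b$).

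Third, I would draw an explicit small network realizing this picture, in the style of Figures 4--6: two confluent fans that merge at the poles of $L_2$ after having diverged before it. The remaining task is the negative half, namely that $L_1$ and $L_3$ lie in no common tree, and this is the step I expect to be the \textbf{main obstacle}. In principle some \emph{third} pole $c$ could dominate both $L_1$ and $L_3$, so that its tree $T_c$ would contain both layers and force $L_1 \sim L_3$, destroying the example. I would therefore design the network so that the only poles dominating $L_1$ sit on the $a$-side and the only poles dominating $L_3$ sit on the $b$-side, and then argue—by following the permanent direction of the current and using that the tree part of any domain is rooted at a single pole—that no vertex dominates both $L_1$ and $L_3$. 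Hence no tree carries both, giving $L_1 \not\sim L_3$ and completing the counterexample.

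Finally, I would observe that $L_1$ and $L_3$ are nonetheless \emph{projective} in the sense of the definition immediately preceding the theorem, since they belong to the two perspective trees $T_a$ and $T_b$; so perspectivity is \emph{strictly} finer than projectivity, and the theorem is really the statement that projectivity is what repairs the failure of transitivity. This is precisely the phenomenon Pacotte borrows from projective geometry, where perspectivity of ranges or pencils is famously non-transitive and the composite of two perspectivities is merely a projectivity, so the result is best read as the motivation for introducing projective layers at all.
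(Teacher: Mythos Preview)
Your approach is essentially the paper's own: two trees $A$ and $B$ (your $T_a$, $T_b$) sharing a middle layer $b$ (your $L_2$), with layers $a\in A$ and $c\in B$ (your $L_1$, $L_3$) furnishing the counterexample, followed by the same closing remark that $a$ and $c$ are nonetheless projective because the poles are in perspectivity. The one substantive difference is that you treat the negative step $L_1\not\sim L_3$ as the real obstacle and plan to rule out a third dominating tree, whereas the paper simply asserts ``$a$ and $c$ are not layers of the same tree'' without further argument; your extra care is warranted, since that assertion is precisely what has to be checked, and the paper's Figure~7 is doing the work that your explicit construction would do.
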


\begin {figure} [h]
\centering
\vspace {-2 \baselineskip}
\includegraphics [width = 4in] {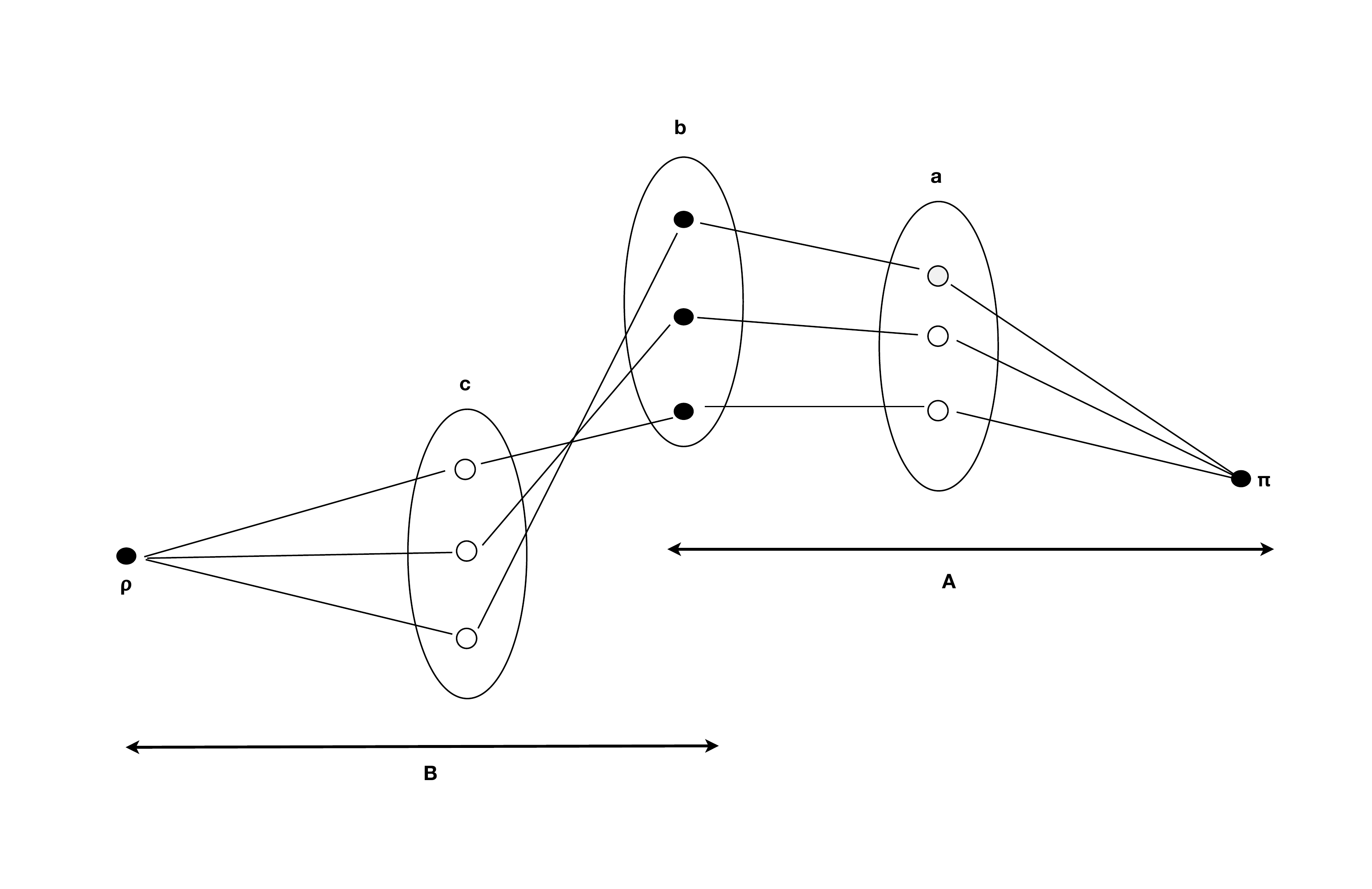}
\vspace {-2 \baselineskip}
\caption {Perspectivity of layers in a Pacotte network}
\label {fig: Cycle6}
\end {figure}

\begin{proof}
Let $ R $ be a Pacotte network, $ a, b, c $ layers of $ R $, $ A $ and $ B $ two trees of $ R $. Let $ \mathcal {P} $ be a relation of perspectivity defined on $ \mathcal{C} $, set of layers. Suppose $ a, b \in A $ and $ b, c \in B $ (see Fig. 7). In this case, we will have:

\[
a \mathcal {P} b \quad \textnormal {and} \quad b \mathcal {P} c \quad \textnormal {but not} \quad a \mathcal {P} c,
\]

since $ a $ and $ c $ are not layers of the same tree and belong to different poles. Therefore, unlike the perspective of the poles, the perspective of the layers is not transitive. However, the poles being in perspectivity (since the $ b $ layer is common to the two trees), the two layers will be {\it projective}.
\end {proof}

\begin {thm} [Pacotte]
The projectivity of the layers is a transitive relation.
\end {thm}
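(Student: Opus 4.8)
The plan is to reduce projectivity of layers to the already-established transitivity of the perspectivity of \emph{poles} (the Proposition proved above). First I would fix notation: for a layer $x$, let $T_x$ denote a tree of the network to which $x$ belongs and $p(T_x)$ the pole of that tree. By the two preceding definitions, two layers are projective precisely when they are either perspective — layers of one and the same tree — or they belong to two perspective trees, i.e. trees whose poles are in perspectivity. Adopting the harmless convention that a pole is in perspectivity with itself, both alternatives collapse into a single criterion: two layers $x$ and $y$ are projective if and only if there exist trees $T_x \ni x$ and $T_y \ni y$ whose poles $p(T_x)$ and $p(T_y)$ are in perspectivity.

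The decisive geometric observation I would isolate next is the one already used implicitly in the proof of the preceding theorem: if a single layer $b$ belongs simultaneously to two trees $B_1$ and $B_2$, then the poles $p(B_1)$ and $p(B_2)$ are in perspectivity. Indeed, the common layer $b$ renders the two poles concordant, and since $B_1$ and $B_2$ are trees their domains are tree-structured up to that layer, so $b$ is a \emph{layer of perspectivity} for $p(B_1)$ and $p(B_2)$ — exactly the defining condition of pole perspectivity.

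With these facts in hand, transitivity follows by chaining. Suppose $a$ is projective to $b$ and $b$ is projective to $c$. By the reduction there are trees $A \ni a$, $B_1 \ni b$ with $p(A)$ perspective to $p(B_1)$, and trees $B_2 \ni b$, $C \ni c$ with $p(B_2)$ perspective to $p(C)$. Since $b \in B_1 \cap B_2$, the observation yields $p(B_1)$ perspective to $p(B_2)$, so we obtain the chain
\[
p(A) \ \sim \ p(B_1) \ \sim \ p(B_2) \ \sim \ p(C),
\]
where $\sim$ denotes pole perspectivity (a symmetric relation, since concordance and the tree-structure condition are symmetric). Applying the transitivity of perspectivity of poles twice collapses this to $p(A) \sim p(C)$; hence $A$ and $C$ are perspective trees (or coincide), and therefore $a \in A$ and $c \in C$ are projective, as required.

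The step I expect to be the real obstacle — and the one deserving the most care — is the bridging across the intermediate layer $b$, namely the observation that a layer shared by two trees forces their poles into perspectivity. Everything else is bookkeeping of the defining cases, but this is where the relation genuinely \emph{gains} the transitivity that the bare perspectivity of layers lacked (compare the previous theorem): the shared layer $b$ does not itself produce a perspectivity between $a$ and $c$, yet it does produce one between the two poles, and it is the transitivity of \emph{that} relation which carries the argument through. One should also verify the degenerate situations (where some of the trees coincide, or where $a,b,c$ already lie in a single tree), but these only shorten the chain and are absorbed by the self-perspectivity convention.
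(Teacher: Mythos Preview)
Your proposal is correct and follows the same route as the paper's proof: reduce projectivity of layers to perspectivity of poles and then invoke the transitivity of the latter (Proposition~4.9). The paper compresses this into two sentences, whereas you make explicit the one non-trivial point the paper glosses over --- the bridging step showing that a layer $b$ shared by two trees forces their poles into perspectivity --- which is exactly what is needed to handle the case where the witnessing trees for $b$ differ in the two hypotheses.
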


\begin {proof}
As we saw above, the projectivity of the layers is based on the perspectivity of the poles. As the perspectivity of the poles is a transitive relation, the projectivity of the layers is also a transitive relation.
\end {proof}

Now consider, in a tree structure $ A $, the region between a layer $ \alpha $ and the pole $ \pi $ of the domain $ D $ to which this layer belongs. In this region, let there be an arbitrary $ P $ pole, with the same sign as the tree structure. Pacotte explains that his domain $d$ {\it crosses} the layer, thus constituting a whole in itself. \\

\begin {dfn} [Tree section]
We will call {\it section} of the $ P $ tree this totality associated with the crossing of the layer $ \alpha $ by a domain $ d $ or, which amounts to the same thing, by the projection of the pole $ P $ on $ \alpha $ (see Fig. 8).
\end {dfn}

To understand, let us recall (Definition 4.1) that the domain of a pole is this region of the network which, from the pole in question, is traversed by a current of the same direction. The section is the sub-tree crossing the layer:

\begin {figure} [h]
\centering
\vspace {-2 \baselineskip}
\includegraphics [width = 4in] {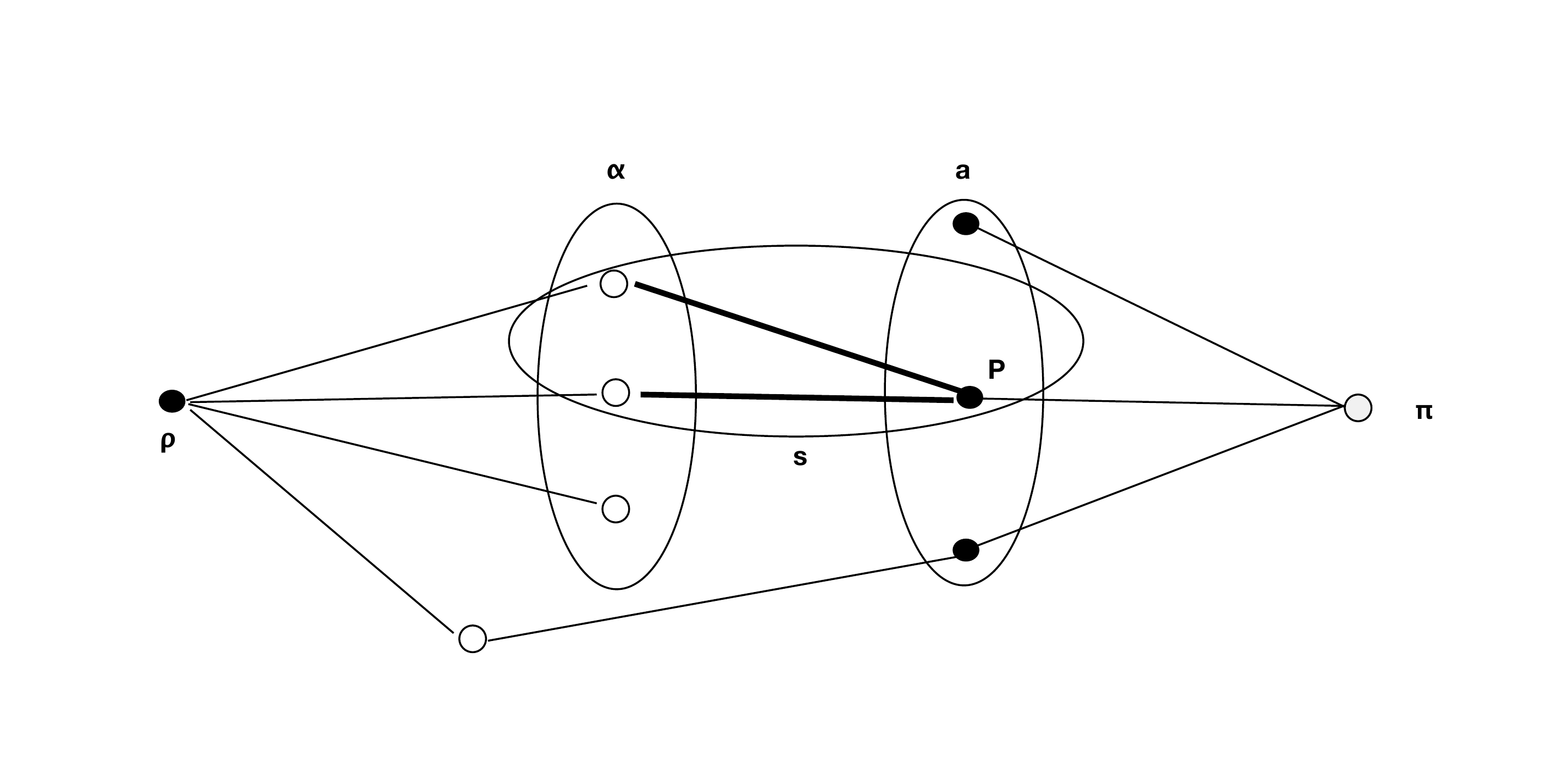}
\vspace {-2 \baselineskip}
\caption {Tree section in a Pacotte network}
\label {fig: Cycle7}
\end {figure}

Pacotte continues as follows: "Let us now take a layer $a$ in the region under consideration. Each of its poles, such as $ P $, projects onto $ \alpha $ in a layer. On the contrary, a pole of the layer $ - \alpha $ has for domain a simple chain, which crosses the layer $ a $ at a single point, such as $ -P $. This situation is explained by saying that there exists between the successive layers $ a $ and $ \alpha $ of a tree structure, a {\it multivalued correspondence}"(see \cite{Pac8} 17-18).

Recall, in fact, that a connecting dipole being the opposite segment to a sheaf radius, the system of connecting dipoles results in a layer which can be called an {\it associated layer}. In other words, if a layer is designated by $a$, the associated layer is designated by $-a$, regardless of the sign of $a$. Furthermore, if $P$ is a pole, then $-P$ is its associated pole on the associated layer, and this, whatever the sign of $P$. We can therefore represent things as in Fig. 9.

\begin {figure} [h]
\centering
\vspace {-2 \baselineskip}
\includegraphics [width = 4in] {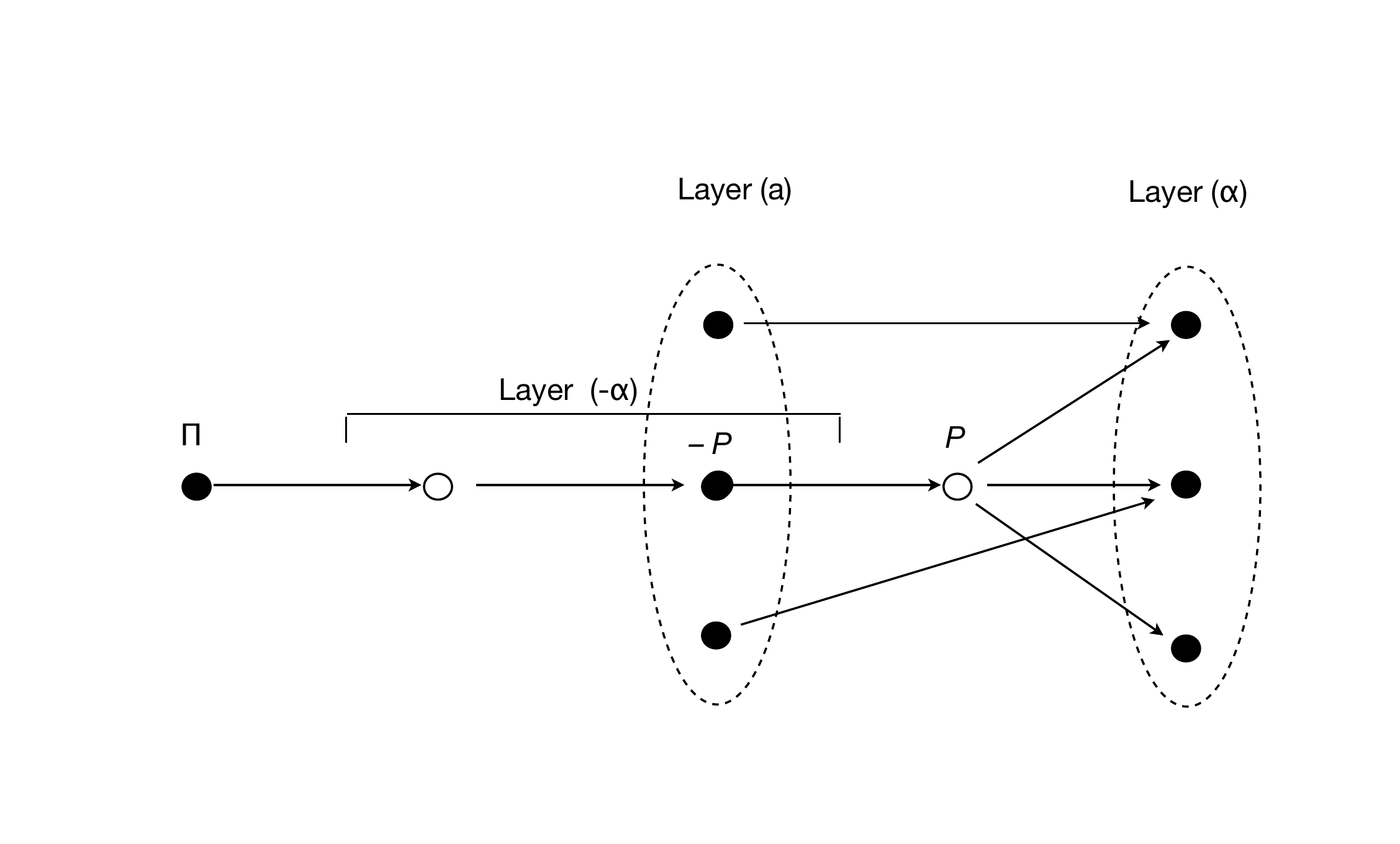}
\vspace {-2 \baselineskip}
\caption {Multivalued correspondence7}
\label {fig: Cycle8}
\end {figure}

Pacotte will then draw the notion of "arithmetic sum" from this conception.

“The application of this idea to the projective layers is remarkable. Such layers belong to perspective poles. Let us then consider a layer $ \alpha $ common to all the trees and located beyond all the layers considered (or at least not below). As each layer $ a $ is perspective to the layer $ \alpha $ and precedes it, there is between $ a $ and $ \alpha $ a one-to-one correspondence. Each layer $ a $ is then called a {\it sums of terms} (the terms are the constituent poles of the layer), the layer $ \alpha $ a set of object-units. The projectivity of the layers defines {\it the equality of sums}: all sums $ a $ are equal to each other and equal to the set $ \alpha $" (see \cite{Pac8}, 18).

In principle, there is no order on the terms of a sum. An order can however be established between the poles constituting a layer "when the tree presents only diffluences with two divergent currents, one of which does not diffuse, while the other diffuses until exhaustion" (see \cite{Pac8}, 18). Starting from the final bifurcation, the domain of an associated pole is an alternating chain, in fact a simple line that will be called $ L $, from which are starting lines going to the different poles of the layer. The order of the branches corresponds to the order of the poles on a layer. When we start from the final bifurcation to go back to the dominant pole, we reconstitute the tree in reverse order (if we compare this process to its genesis): it is very exactly, the operation of addition. “The order of addition is the reverse of the natural order of terms, which is defined by the final tree. A sum is a layer: to consider two orders of addition is to take into consideration two order trees, in perspectivity according to this layer\footnote{We could probably put this conception in relation to the pre-or post-fixed Polish notation, which translates, in effect, operations like addition into a tree structure (see, for example, \cite{Gon}, 114).}. The {\it commutativity} of addition is simply the possibility of taking a sum layer as a common layer of two order trees, establishing different orders” (see \cite{Pac8}, 19).

In modern terms, we will therefore say that Pacotte reduces the idea of arithmetic sum to two (for him) more fundamental notions which are: 

\begin{enumerate}
\item A transitive graph of projectivity whose elements are the layers.
\item A  multivalued mapping between pairs of layers. 
\end{enumerate}

We have already defined the notion of transitive graph (see Definition 4.11) and also that of multivalued mapping (see Definition 4.8), which is not very different from the general notion of {\it graph relation}. The transitive projectivity graph associated with the arithmetic sum as reconstructed by Pacotte is therefore a subgraph contained in the graph associated with the product of composition of multivalued applications.

It would remain, however, to explain in a more mathematical way the links between projectivity and sum. A projective algebra would presumably assume that we place ourselves in the framework of a (finite) projective geometry expressed in homogeneous coordinates, so that the operation of addition, associating elements of the finite with a point of projection located at finite distance, makes it possible to establish a true isomorphism between "projection" and "sum", which could then be read in a reversible manner. This isomorphism, easy to find, for example, in the case of a Boolean algebra and the famous 7-point projective Fano plane\footnote{On finite projective geometries, one can consult the classic work of \cite{Dem}. French Dominique Dubarle put it into practice in his interpretation of Hegel's logic, transforming the boolean algebra defined on a boolean set product $\{0, 1\}^2$ whose elements are isomorphic to 4 constants of structure (the 3 Hegelian constants ($U$, for universal, $P$ for particular, $S$ for singular, plus the null term $\Lambda$) into a projective algebra (see \cite{Dub}) defined on PG(2, 2). This one has a special addition $\oplus$, written in homogeneous coordinates. We get, for example : ($1, 0, 0) \oplus (1, 1, 0) = (0, 1, 0)$, which means : when we "add" (0, 0), identified with $\Lambda$ to (1, 0), identified with $U$ – terms which are finite representations –, we obtain the infinite representation (though located at a finite distance) of which they are the trace in the finite.  See also \cite{Hir1} and, for a for a more historical point of view, \cite{Hir2}, 34-81.}, would probably be more difficult to establish in any case. But it certainly exists. The fact remains that Pacotte's approach on this point is somewhat metaphorical: the correspondence is not really mathematically established between the idea of sum and that of "projection".

To finish with the representation of the sum in the Pacotte polarized tree network, one will note the difference of treatment between the properties of {\it commutativity} and {\it associativity}.  Commutativity does not belong to the transformations which one can make undergo a sum without in change its value. Associativity concerns the sum itself and is explained by Pacotte in terms of substitution and association. The ideas of belonging and inclusion are also explained by Pacotte in terms of membership of poles in layers, or membership of layers in other layers, the transitivity of these operations also founding, thereby, the laws of logical deduction (syllogisms and chains of syllogisms).

\section{The tree-like conception of the ordered product}

For understanding, this section must be accompanied by the following definitions: \\

\begin{dfn}[Parallel system]
"Let us call {\it parallel system}, writes Pacotte, the network generated by alternating chains, not presenting mutual confluences, and united by a sheaf making bridge in any place" (see \cite{Pac8}, 22).
\end{dfn}

A system of poles with the same sign (not necessarily that of the sheaf), where each chain is represented, is a {\it layer of the parallel system}. This layer is, at the same time, a layer of the domain of the center of the sheaf. \\

\begin{dfn}[Equal layers]
Two layers with the same sign of the parallel system will be called {\it prospectively equal} or, more simply, {\it equal}.
\end{dfn}
\vspace{1\baselineskip}

\begin{dfn}[Multiplication of layers]
“Let us consider parallel perspective systems (an example is given in Fig. 4, last picture) according to a positive layer $a$. Take one of these systems and choose a positive layer $a ’: a’ = a$. Suppose  now positive the sheaf of the parallel system (the sheaf is positive because its center $G$ is positive). In addition, suppose that this sheaf is thrown between $a$ and $a'$ so that the current flowing from $G$ goes to $a’$, not to $a$. Let us make the same assumptions and constructions for the other parallel systems: we thus define $a'', a'''$, ... Let us then assemble the centers $G$ of the sheaves of those parallel systems under a pole $B$, also positive (by connecting dipoles and sheaf). Call $b$ the layer of $G$, and $r$ the next layer of the domain of $B$, the one where the constituent poles of $b$ project along $ a', a''... $. Then, $b$ as the sum layer, represents the sum of $ b $ layers equal to $ a $, and the layer $ r $ as collection of units, the value of the sum. Thus, the network is that of the multiplication of $a$ by $b$" (see \cite{Pac8}, 23).
\end{dfn}

We try to represent this situation, difficult to imagine, in two successive diagrams. In the diagram of Fig. 10, $ C_{1}, C_{2} $ and $ C_{3} $ form alternating chains making up the "parallel system" ($ S $). They are joined by a "bridge" formed by the sheaf $ a $ (which is also a "layer" of the system, that is to say, as explained above, a system of poles with the same sign where each chain is represented). The layers $ a, a'$ and the intermediate layer $ i $ (positive sheaf with center $G$) are "perspectively equal", and therefore, this system - "parallel", since the layers are perspective - can be conceived as represented by a single layer (which can justify Pacotte's somewhat ambiguous expression: "as many systems as there are layers").

\begin{figure}[h] 
	   \centering
	      \vspace{-2\baselineskip}
	   \includegraphics[width=4in]{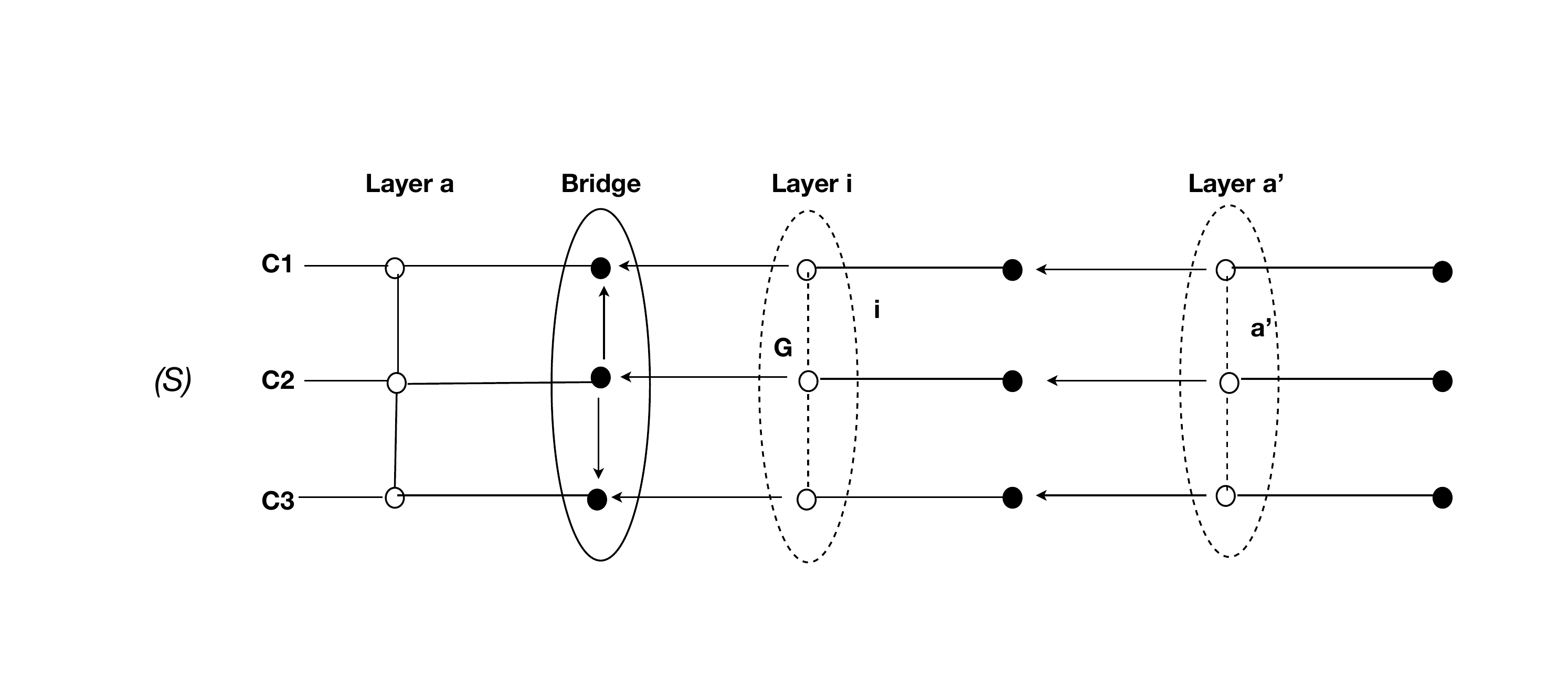} 
	      \vspace{-2\baselineskip}
	   \caption{The notion of {\it parallel system}}
	   \label{fig: Cycle9}
	\end{figure}

	We must now imagine (see Fig. 11) other systems $ (S'), (''), ... $ such as ($ S $), on which layers $ a'", a''', ... $ are defined with, each time, sheaves of center $ G_ {1}, G_{2}, ... $ Pacotte supposes that it is always possible to assemble these centers under a pole $ B $, also positive. Then the $ G_{i} $ themselves form a layer $ b = a^{(k)} $. Let $ r $ be the layer $ a^{(k + 1)} $ in the domain of $ B $. This is the layer where the constituent poles of $ b $ are projected, according to $ a' a'', ...$, etc. We thus arrive at the definition of the product: \\

\begin{dfn}
The sum layer $ b $ represents the sum of $ b $ layers equal to $ a $, and the layer $ r = a^{(k + 1)} $ - or collections of units - is the value of this sum . Thus, this network is that of the multiplication of $ a $ by $ b $.
\end {dfn}	

	\begin{figure}[h] 
	   \centering
	      \vspace{-1\baselineskip}
	   \includegraphics[width=4.5in]{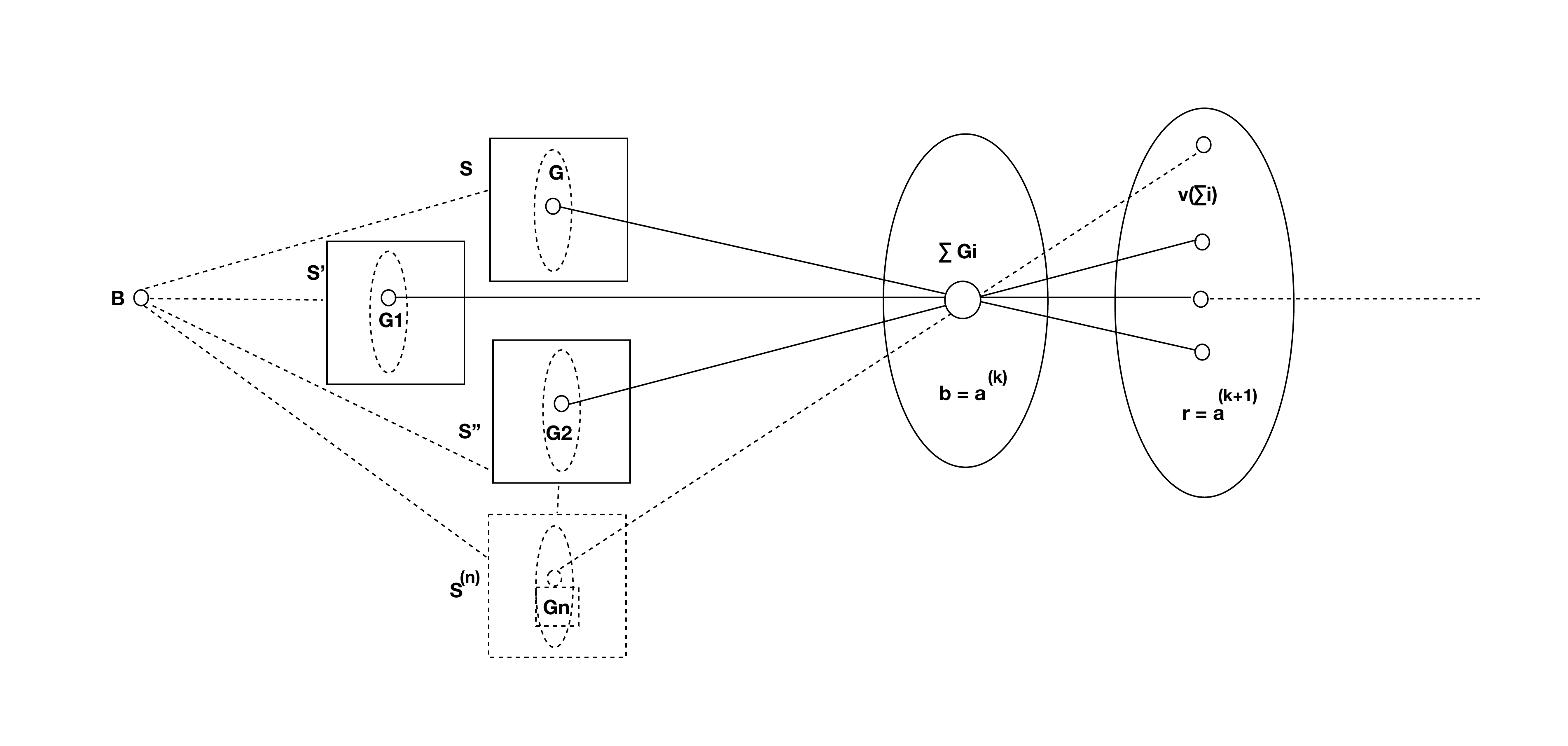} 
	      \vspace{-1\baselineskip}
	   \caption{The notion of {\it ordered product}}
	   \label{fig: Cycle10}
	\end{figure}

	For a theory favoring polarity, that is to say asymmetry or antisymmetry and relations of order, the questions of equivalence, symmetry and reciprocity are difficult to express. Pacotte performs wonders, in the following, to try to explain the commutativity of the product of two factors by introducing, in a way that is not really clear - to say the least - the concept of {\it reciprocal layer}.

But the introduction of a third factor removes the symmetry so that, at its bottom, the network of the ordered product of $ n $ factors $ a, b, c, .... $ does not manifest, in the general case, neither associativity, nor commutativity. Pacotte's statements here are quite unusual and very different from, say, Wedderburn-Etherington nonassociative products (see \cite{Par3}, 94-97). It must be understood on the one hand that the factors are assembled in a certain order, $ ab, abc, abcd ... $ (an indeterminacy subsisting in the order of the poles of the final layer), and that, on the other hand, the addition order and the tree order are reversed. In this context, we have:

\[
(ba) c = bac,
\]

because, as $ ba = ab $ (commutativity of the product of 2 factors), the assembly order is respected. On the other hand:
\[
c (ab) \neq cab,
\]
because the order here, even supposing the commutativity $ab =ba$, is not respected. In addition, the question of commutativity, in the case of a product of $ n $ factors, will not be resolved until later, with the examination of the “branching genesis of the multidimensional”.

Among the properties of multiplication, there is also the question of distributivity. How to express it in branching geometry? Pacotte gives himself a tree structure whose final layer is $ a $, as well as any layer $ \alpha $, so that the sum layer $ \alpha $ is equal to $ a $. Same construction with a second final layer $ b $ and any layer $ \beta $. $ \alpha_{\ mu} $ and $ \beta_{\ nu} $ then being the constituent poles of $ \alpha $ and $ \beta $, Pacotes names $\alpha(\mu) $ and $ \beta (\nu) $ their respective projections on $ \alpha $ and $ \beta $. Pacotte then successively forms:

\begin {itemize}
\item Parallel systems from $ a (\mu) $ and $ b (\mu) $;
\item Layers equal to these given layers, in sufficient number;
\item Products $ a (\mu) b (\nu) $ on these separate layers, for all the values of $ \mu $ and $ \nu $.
\end {itemize}

Pacotte groups the $ a (\mu) b (\nu) $ each time by means of a top-sheaf $ -A (\mu) B (\nu) $. The layer of vertices then becomes, as a sum, equal to the layer of $ -a (\mu) b (\nu) $. The layers $ A (\mu) B (\nu) $ and $ a (\mu) b (\nu) $ being reciprocal layers, respectively for $ \alpha $ and $ \beta $ and for $ a $ and $ b $, we can therefore simply say that we go from the $ \alpha \beta $ layer to the $ ab $ layer by substitution. We thus go from each pole $ \alpha (\mu)\beta (\nu) $ or $ A (\mu) B (\nu) $ to the product of the layers $ a (\mu) $ and $ b (\nu ) $. What has just been exposed constitutes the network of distributivity of the product in the case of 2 factors and a sharing of each factor. But - if we are not afraid of complications - we can obviously generalize the construction.

\section {The branching genesis of combinatorics, multilinear algebra and logic}

The rest of Pacotte's construction concerns combinatorics, multidimensional order and logic.	

\subsection{Combinatorics}

	How to generate, first, {\it combinatorial analysis} from a tree network? It is advisable to start from the ordered network of $ n $ factors, previously described. This network presents a {\it main tree} whose layer $ r_{p} $, of rank $ p $, represents the ordered product of $ p $ factors. Pacotte assumes that this tree has a vertex $ A $, identified with the top of the sheaf of layer $ a $. As we have a root and a sequence of normal layers, there is a total order on the layers. "The layer of rank 1 is the factor $ a $ writes Pacotte; that of rank 2, the product $ ab $; that of rank 3, the product $ abc $, and so on" (see \cite{Pac8}, 27). The previous layers obviously have their opposites ($ -a, -b, ..., -r_{p} $) and the products present the totality of the arrangements $ a_{k}, b_{j}, c_{i} ... $, where we take the first $ p $ letters. The order is made up from the order tree constituting the domain of each pole, from $ -r_{p} $ to $ -r_{2} $ (diffluence). Even for negative values of the layers, the arrangements will conventionally be designated by $ a_{i}, b_{j} ... $, etc. The network of the $ n $-th power of a layer $ a $ will play the role of network of arrangements $ a_{i}, a_{j}, a_{k} ... $, etc. The arrangements being relative to $ m $ types of objects, and each of them being an arrangement of $ p $ objects, these arrangements will be designated by $ A'(m, p) $.
		
	Let now be a pole $ N_{p} $ associated with $ M_{p} $ by a connecting dipole. Its domain is a tree structure contained in the main tree structure of the network. Let be a layer $ r_{p '} $ of rank $ p'> p $ and a system of poles $ N_{p '} $ associated with $ M_{p'} $. The totality of $ A'(m, p, p') $ will constitute the arrangements of $ m $ objects taken $ p'$ by $ p' $, starting with $ A'(m, p) $. "We will say that $ M_{p} $ represents the {\it intension} of $ A'(m, p) $ and $ N_{p} $ its {\it extension}, ideas to which will correspond, in the network of combinations (and especially in the network of predicates), the intension and extension of classical logic. With a few additional complications, Pacotte then succeeds in geometrically representing the logical product (formed of "diagonal" $ A'(m, p) $, i.e. whose entire domain merges into a single pole of the layer $ a $), as well as combinations of the type $ C(m, p) $. The network of $ C(m, p) $ then makes it possible to generate the {\it multidimensional} (see \cite{Pac8}, 32-33).

We cannot follow the reconstruction of Pacotte in detail. Perhaps it suffices to understand its meaning, which is very well summarized by the author at this point in its construction: for him, there is no doubt that the usual approach to founding mathematics consists - given the role of commutative multiplication in this discipline (which makes it possible to easily generate a multidimensional network of points aligned in $ n $ directions) - to take this very network as a foundation, and to give the formal logic in a completely independent way.
		
	“In opposition to this attitude, our doctrine, writes Pacotte, can be characterized as follows: starting from the branching space, demonstrating the branching genesis of the multidimensional and drawing from the branching space, particularly from the polarized network of $ C (m, p) $ and from the polarized multidimensional network, the logic of present or absent predicates, then the logic of propositions (see \cite{Pac8}, 34).

\subsection{Logical product multidimensional order}

In the {\it logical product}, for example, the $ n $ sets whose product is to be defined will be "the branching projections of $ n $ poles on a fundamental layer such as $ a $ for which we have constructed the network of $ m^n $ arrangements $ A'(m, n) $ and identified the system of $ m $ diagonal arrangements $ D (m, n) $, one for each pole constituting $ a $ "(see \cite{Pac8}, 36). Pacotte then endeavors to explain mathematically the properties of the logical product, in particular its distributivity, and how the extension of a logical product can be equal to the product of particular extensions of its components.

\subsection {The network of present and absent predicates}

In this context, what logicians call {\it predicate} is a constituting pole $ a_{i} $ of a layer $ a $ of the polarized network, on which is built the network of $ C(m, p) $, in other words the network of combinations. The extension to rank $ p'$ of an object assembling $ p $ predicates is therefore the extension $ \Xi_ {p'} C (m, p) $, i.e. the {\it class} of objects having $ p'$ predicates and presenting the $ p $ predicate of the object considered. The predicates are then the $ a_{i} $ whose presence or absence is asserted in a $ C(m, p) $, this assertion being necessarily thought in a multidimensional network whose each dimension presents only two elements, to which presence and absence correspond (see \cite{Pac8}, 41). So we have representations of this kind:
	
	\[
[(a_{1})_{i} (a_{2})_{j} (a_{3})_{k}...]_{p},
\]

with indices $ i, j, k $ which take only the values 0 and 1. We then simplify the notation by posing:
\[
a_{i} \ \textnormal{pour} \ (a_{i})_{1} ; \quad \dot{a}_{i} \ \textnormal{pour} \ (a_{i})_{0}.
\]
	
	It follows that the set of possible combinations, instead of being a power of 2, as in ordinary logic, is here a power of 3. In this geometry of logic, there is $ C_{m, p } $ plans with $ p $ dimensions, $ 2^p $ complexions of characteristic $ p $ for $ p $ determined dimensions, and $ 3^m $ total complexions – from where the formula:
	
	\[
\sum_{p=0}^m\ 2^p\ C_{m, p} = 3^m.
\]
	
	Pacotte then manages to make sense of the idea of a logical {\it proposition}. If the predicates are poles, the propositions are figures in space, corresponding to expressions obtained "by composing extensions by the operations of complementarity, product and logical sum" (see \cite{Pac8}, 45). Even if the formalism seems heavy, their usual meaning can be easily reconstituted. Depending on whether or not we take some of these complexions, we obtain this or that expression. Thus, the figures in the predicate space are represented by expressions of the type:
	
	\[
\overline{\Xi_{1}\Xi_{2}} \quad , \quad \overline{\Xi_{1}\overline{\Xi_{2}}}, ...
\]

where we have, for example:

\[
\Xi_{1} = \Xi_{m}(\dot{a}_{1}, a_{2}), \ \Xi_{2} = \Xi_{m}(a_{2}, \dot{a}_{3}),
\]

and so on. There will be expressions such that:\\

\begin{itemize}
\item  $\Xi_{1} + \overline{\Xi_{1}}  \qquad \qquad \textnormal{(sum of two complementary propositions)}$.
\item $\overline{\Xi}(a_{1}) = \Xi({\dot{a}_{1})} \quad \textnormal{(negation of}\ \Xi\ \textnormal{for}\ a_{1}\ \textnormal{true, i.e assertion of}\ \Xi\ \textnormal{for}\ a_{1}\ \textnormal{false})$.\\
\end{itemize}

The implication is more subtle to represent. If we have:

\[
\Lambda_{1} = \overline{\Xi_{1} \Xi_{2}}, \quad \Lambda_{2} = \overline{\Xi_{3} \overline{\Xi_{1}}},
\]
with:	

\[
\Xi_{1} = \Xi(a_{1}), \ \Xi_{2} = \Xi(a_{2}),\ \Xi_{3} = \Xi(a_{3}),
\]

to take $\Lambda_{1}$ is equivavent to exclude $\Xi_{1} \Xi_{2}$, so $\Xi_{1}(a_{1}, a_{2})$. In the same way, to take $\Lambda_{2}$ is to exclude:

\[
\Xi_{3} \overline{\Xi_{1}} = \Xi(a_{3}) \overline{\Xi(a_{1})} = \Xi(a_{3}, \dot{a}_{1}).\\
\]

It follows from this that we exclude complexions which present $ a_{3} $ without presenting $ a_{1} $. In other words, the presence of $ a_{3} $ entails or "implies" that of $ a_{1} $.

Now take the following new example:

\[
\Xi(a_{1} \dot{a}_{2}, \dot{a}_{3}) + \Xi(\dot{a}_{1} a_{2}, \dot{a}_{3}) + \Xi(a_{1} \dot{a}_{2} a_{3}),
\]

where the indices 1, 2, 3 are some values of the index $ i $ of $ a $. Complexions having no common elements, the logical product of their extensions is zero. The $ \Xi $ have no common complexion and we only retain the complexions which present one – and only one – of the three predicates. We express this situation in ordinary language by saying that we have either $ a_{1} $ or $ a_{2} $ or $ a_{3} $. The figure therefore expresses a disjunction of the predicates $ a_{1}, a_{2} $ and $ a_{3} $\footnote {see \cite{Pac8}, 45-46. This situation is somewhat reminiscent of the idea of a disjunctive normal form in usual formal logic.}.

Using figures expressing incompatibilities, Pacotte then comes to be able to geometrically represent the notion of {\it deduction}.

\subsection{The network of propositional logic}	

	The logic of propositions, which is the simplest in usual formal logic, is, from the point of view of the polarized network, the most difficult to reconstruct. Indeed, while the complete predicative propositions correspond to {\it determined} figures, the abstract propositions of the type $ p, q, ... $, here called rather $ \Phi_{1}, \Phi_{ 2}, ... $, correspond to {\it indeterminate} figures. Each of these figures is one of the $ 2^{(2^m)} $ possible figures, and we are therefore led to consider the variable figures $ \Phi_{1}, \Phi_{2}, ... $ as variables each taking $ 2^{(2^m)}$ values. “A proposition therefore admits a latitude, no longer in the space at $ 2^m $ points (complexions of predicates present or absent), but in a discrete space $ E $ whose number of points is expressed by:
\[
|E| = (2^{(2^m)})^N.
\]
	
	To form the network of this space, the starting point is necessarily the network of present and absent predicates. We then replace the singular, individualized and concrete figures $ \Phi_{1}, \Phi_{2}, ... $ with variable figures or complexions. Knowing that there will be as many layers in the network as there are $ \Phi $ variables, that is $ N $, the network of the commutative product of the $ N $ layers brings up expressions whose maximum form will be:	
\[
\Phi_{1}(\sigma_{1}), \Phi_{2}(\sigma_{2}), ..., \Phi_{N}(\sigma_{N}),
\]

where the index $ \sigma $ varies, for each $ \Phi $ separately, from 0 to $ 2^{2^m} $.

Then, if we have:

\[
\Phi_{1}(\sigma'_{1}). \Phi_{2}(\sigma'_{2}) = 0,
\]

we will also havei $\Phi_{1} \Phi_{2} = 0$ for a system:

\[
\Phi_{1}(\sigma'_{1}), \Phi_{2}(\sigma'_{2}), ..., \Phi_{N}(\sigma'_{N}),
\]

and for all systems conserving these values  $\sigma'_{1}$ et $\sigma'_{2}$, i.e. for all systems:

\[
\Xi_{N}\ [\Phi_{1}(\sigma'_{1}), \Phi_{2}(\sigma'_{2})],
\]

where $ \Xi $ denotes an extension in the space of variables $ \Phi $. Conversely, for any system of values of the two variables $ \Phi_{1} $ and $ \Phi_{2} $, we have a particular $ \Xi_{N} $. All these $ \Xi_{N} $ being mutually external to each other (at least one coordinate varies from one to the other), their sum is the figure:
\[
R(\Phi_{1}\ \Phi_{2}) = 0.
\]

Simply put, at each point in $ R, \ \Phi_{1} \ \Phi_{2} $ is zero, while its complement $ \overline{R} (\Phi_{1} \ \Phi_{2}) $ is not zero.

Propositions of the type $ \Phi _{\alpha} $ is in $ \Phi _{\beta} $ will be put in the form $ \Phi_{1} \ \Phi_{2} = 0 $ (which will be called propositions of the type $ p $) and the integer space $ E_{\Phi} $ of $ N $ variables $ \Phi $ will admit expressions such as:

\begin{align}
R(p_{1}) + \overline{R}(p_{1}) = R(p_{2}) + \overline{R}(p_{2}) = ...\\
= [R(p_{1}) + \overline{R}(p_{1})][R(p_{2}) + \overline{R}(p_{2})] = ...\\
= \sum R(p_{1})\ \overline{R}(p_{1})\ \overline{R}(p_{3})\ R(p_{4}) = ...
\end{align}

where $p_{1}, p_{2}...$ are propositions of the type $p$.

These expressions obviously recall the expressions of the total space of the predicates complexions.

Finally, let us now give two examples of figures representing known propositions:

The figure:

\[
R_{1} \overline{R}_{2}\overline{R}_{3} + \overline{R}_{1} R_{2} R_{3} + \overline{R}_{1} \overline{R}_{2} R_{3},
\]

where $ R $ denotes $ R(p_{i}) $, corresponds to a proposition which, expressed in ordinary language, is of the type: "either $ p_{1} $ is true, or else $ p_{2} $ is true, or $ p_{3} $ is true".

Another example is the following: the implication of the type "$ p_ {1} $ implies $ p_ {2} $" corresponds to the geometrical situation of the polarized network: $ R_{1} $ is a part of $ R_{2} $, relation which is not expressed by a region of $ R $ but by a relation between two figures. A disjunction and an implication are therefore, from the point of view of the polarized network, very different propositions\footnote {Instead of having, as we know, in ordinary logic, a definition of implication in terms of disjunction, since $ p \supset q = \neg p \vee q $.}. But the syllogism works, since, due to the transitivity of the inclusion relation, if $ R_{1} $ is a part of $ R_{2} $ and $ R_{2} $ is a part of $ R_{3} $, then $ R_{1} $ is a part of $ R_{3} $, which means that $ p_{1} $ implies $ p_{3} $.

Of course, the negation of a proposition $ p $, in other words $ \Phi_{1} \Phi_{2} \neq 0 $ ($ \Phi_{1} $ and $ \Phi_{2} $ have a common part ) and the disjunction of several propositions $ p_{1}, p_{2} ... $ are also expressed by a figure $ R $. The network therefore makes it possible to process all the propositions of classical calculus (for examples, see \cite{Pac8}, 53-54).

\section{Conclusion}

	A reviewer of Pacotte's book on {\it La physique théorique nouvelle} (1921) noted the solidity of this one. He also stated that he did not see where the book lacked mathematical equipment, even though in most cases the author used the mathematical equations "written with words". For those interested, he observed, however, that "the equations written in symbols" would have made reading easier (see \cite{Rev}, 485). We could make the same remarks about {\it Le Réseau arborescent}. Except that writing in words must have been a theoretical choice by Julien Pacotte. Theoretical and not just educational: for him, mathematics had to be based on something other than themselves. Now, what is closer to the empiric world than natural language? In this sense, our attempt to translate Pacotte's language into mathematical language could pass for a misinterpretation or, in any case, an infidelity, with regard to the author's project. At the same time, it was necessary to explain well. And when the subject becomes complex, the choice to write in natural language becomes difficult to hold. In {\it Le Réseau arborescent}, we see the symbols reappear towards the end of the book. We then regret that they did not return earlier: this would have clarified the discourse and removed some ambiguities.
	
	Despite our efforts to understand, reformulate mathematically, illustrate diagrams the text of Pacotte, we do not hide that it remains sometimes a little enigmatic, like, by the way, its author who, apart from his books, does not will hardly have left a trace behind him. All the more reason, obviously, to try to get him out of oblivion. It was a great attempt, in fact, to start from the Bergsonian becoming, to formalize this concept by the notion of "tree network", then, the properties of the network being described, try to reconstruct from there by branching the whole of mathematical thinking - that is, ultimately, the form of all that is. Pierre de la Ramée himself did not go that far\footnote{Note also that the tree, limited to a principle of specification, is already a legacy in Ramus. First, it is from the XI $^{th}$ century that the motif of the tree is deployed in Western culture, especially in religious art, through the tree of Jesse, which represents the genealogy of the Christ. It then spreads quickly and appears in illuminations, frescoes, stained glass, sculptures, and begins to symbolize with abstract trees from logical reflection. As Jean Lecointe writes, "the end of the Middle Ages and the beginning of the Renaissance made the tree more and more proliferating. The oldest is undoubtedly the Porphyry tree, of the IV$^{th}$ century, which represents in this form the subdivisions of the genera of being, from indeterminate being to rational being. For the tree, both the logical tree and that of Jesse, is the order of the {\it specification}; the report is the same, the ancient logicians never cease to emphasize it, from race to individual, and from genus to species, to use Aristotelian terminology ”(see \cite{Lec}, 27-28).The Porphyry tree therefore has abundant posterity. We find it in Rodolphe Agricola, at the end of the XV$^{th}$, to unify logic and rhetoric, and Erasmus and Ramus will basically only popularize the use. During this time, on the vaults of the monuments, the flamboyant art deploys the same movement of ramification, starting from the elements of division of space, still elementary, of the first Gothic with intersecting ribs infinitely multiplied by ribs or counter-curves and finials. "The unbridled taste that Pierre de la Ramée, says Ramus, inherits from his predecessors for the tree ramifications, so in harmony with his name, must undoubtedly be related to his recognition of the individual as {\ it species} to full part”, concludes J. Lecointe (see \cite{Lec}, 28). See also \cite{Dum}).}

And, up to the logician Léo Apostel (see\cite{Apo}, 157-230), then to the work that we have been able to carry out ourselves recently on the theory of classifications, no one had launched such a daring project.
	
	If Pacotte's essay is not fully convincing, it is not only because mathematics has grown so much since his time that the task of bringing sophisticated structures back to the shape of a tree network could prove to be today difficult, if not impossible, to achieve. It is also because the means available to the author for this reconstruction were probably a little too limited: Pacotte writes at a time when the theory of graphs is still on the verge of being born. It was in 1936, the year when {\it Le Réseau Arborescent} appeared, that the Hungarian mathematician Dénes König published his {\it Theorie der endlichen und unendlichen Graphen} (see \cite{Kon}) and it was only in the post-war years that the theory of graphs (with Tutte, Berge and others) as well as the theory of transportation networks (with Ford and Fulkerson) will really develop. The use of projective geometry, for its part, also remains, in all of Pacotte's text, somewhat metaphorical and imprecise. One could wonder why. That projective geometry could found the notions of sum and algebraic product was however an idea already present in von Staudt from 1856-57, notably in the second volume of {\it Beitrage} (see \cite{Sta}). 	
	
	In addition, the Italian mathematician Gino Fano (see \cite{Fan}) had proposed an axiomatic treatment of projective finite geometry in 1892. One can also mention, on the same subject, the remarkable article by Oswald Veblen from 1906 (see \cite{Veb}). But it is true that the use of these formal tools contravened Pacotte's project, one of the last attempts to suspend mathematics from what was not it. A weakened (and therefore more reasonable) version of this project – to base mathematics on constructive approaches – which will develop, with the success we know, via Brouwer and Heyting, would undoubtedly help to definitively bury his attempt. It was all the same interesting to show what was, in its essence, this essay, even if it passed for the witness of a bygone era.

{}
\end{document}